\documentclass{article}             
\usepackage{graphicx}
\usepackage{amsmath}
\usepackage{amsthm}
\usepackage{cite}
\usepackage{accents}
\usepackage{mathtools}
\usepackage{amssymb}
\usepackage{stackengine}
\usepackage{comment}
\usepackage{geometry}
\usepackage[most]{tcolorbox}
\usepackage{xcolor}
\setlength{\marginparwidth}{2cm}
\usepackage{todonotes}
\usepackage{arydshln}  
\newtheorem{remark}{\bf Remark}
\theoremstyle{definition}
\newtheorem{theorem}{\bf Theorem}
\newtheorem{lemma}{\bf Lemma}

\newtheorem{problem}{\bf Problem}

\theoremstyle{plain}
\newtheorem{assumption}{\bf Assumption}
\usepackage[framemethod=TikZ]{mdframed}

\newcommand{\paren}[1]{\ensuremath{\left( #1\right)}}

\newcommand{\set}[1]{\ensuremath{\left\{ #1\right\}}}

\newcommand{\norm}[1]{\ensuremath{\left\| #1\right\|}}
\newcommand{\snorm}[1]{\ensuremath{\| #1\|}}
\newcommand{\abs}[1]{\ensuremath{\left| #1\right|}}
\newcommand{\tr}{\mathrm{tr}}
\newcommand{\real}{\mathbb{R}}

\newcommand{\opnorm}[1]{\ensuremath{\| #1\|_{\mathrm{op}}}}
\newcommand{\hsnorm}[1]{\ensuremath{\| #1\|_{\mathrm{F}}}}
\newcommand{\hinfnorm}[1]{\ensuremath{\| #1\|_{\mathcal{H}_{\infty}}}}
\newcommand{\E}{\mathbb{E}}

\renewcommand{\Pr}{\mathbb{P}}

\newcommand{\du}{d_{{u}}}
\newcommand{\Du}{D_{{u}}}

\newcommand{\sigmau}{\sigma_{{u}}}
\newcommand{\sigmae}{\sigma_{{e}}}
\newcommand{\dy}{d_{{y}}}
\newcommand{\de}{d_{{e}}}
\newcommand{\Np}{N_p}
\newcommand{\Ntot}{N_\mathrm{tot}}
\newcommand{\cplx}{\mathbb{C}}

\newcommand{\diag}{\mathrm{diag}}

\newcommand{\qq}{\mathrm{\mathstrut q}}
\newcommand{\A}{\mathcal{A}}
\title{\LARGE \bf
Finite Sample Frequency Domain Identification
}
\usepackage{hyperref}
\hypersetup{colorlinks,
            linkcolor=blue,
            citecolor=green!50!black}
\usepackage{authblk}
\author[1]{Anastasios Tsiamis}
\author[1]{Mohamed Abdalmoaty}
\author[1]{Roy S. Smith}
\author[1]{John Lygeros}
\affil[1]{ETH Zürich, Switzerland}
\begin{document}

\maketitle
\thispagestyle{empty}
\pagestyle{empty}

\begin{abstract}
We study non-parametric frequency-domain system identification from a finite-sample perspective. We assume an open loop scenario where the excitation input is periodic and consider the Empirical Transfer Function Estimate (ETFE), where the goal is to estimate the frequency response at certain desired (evenly-spaced) frequencies, given input-output samples. We show that under sub-Gaussian colored noise (in time-domain) and stability assumptions, the ETFE estimates are concentrated around the true values. The error rate is of the order of $\mathcal{O}((\du+\sqrt{\du\dy})\sqrt{M/\Ntot})$, where $\Ntot$ is the total number of samples, $M$ is the number of desired frequencies, and
 $\du,\,\dy$ are the dimensions of the input and output signals respectively. 
 This rate remains valid for general irrational transfer functions and does not require a finite order state-space representation. By tuning $M$, we obtain a $\Ntot^{-1/3}$ finite-sample rate for learning the frequency response over all frequencies in the $
\mathcal{H}_{\infty}$ norm.   
Our result draws upon an extension of the Hanson-Wright inequality to semi-infinite matrices. We study the finite-sample behavior of ETFE in simulations. 

\end{abstract}
\section{Introduction}\label{sec:introduction}
We consider the identification of \emph{unknown} linear, discrete-time, time-invariant systems of the form
\begin{equation}\label{eq:system}
\begin{aligned}
     y_t&=\bar{y}_t+v_t,\\
  \bar{y}_t&= G(\qq) u_t,\,  G(\qq)\triangleq \sum_{s=0}^{\infty}g_s \qq^{-s},
\end{aligned}
\end{equation}
where $t\in\mathbb{Z}$ is the time, $u_t\in\mathbb{R}^{\du}$ is the input, $y_t\in\mathbb{R}^{\dy}$ is the output, $\qq^{-s}u_t=u_{t-s}$ is the backward shift operator, and $g_t\in\mathbb{R}^{\dy\times \du}$ is the impulse response. The noiseless output $\bar{y}_t$ is perturbed by some random noise process $v_t\in\mathbb{R}^{\dy}$.
We are interested in estimating the frequency response $G(e^{j\omega})$
from finite input-output data.

Frequency domain identification has been extensively studied~\cite{Ljung1999system,schoukens2004time,pintelon2012system}. The estimation error guarantees (on its distribution) are typically asymptotic, e.g. see Central Limit Theorem in~\cite[Ch. 16]{pintelon2012system}, and, thus, are valid when the number of samples grows to infinity. 
Here, we adopt a finite-sample point of view, motivated by advances in modern statistics~\cite{vershynin2018high,wainwright2019high} and statistical learning theory. 
Asymptotic methods are sharp asymptotically but are often heuristically applied for finite samples. Finite-sample bounds, on the other hand, are valid for any number of samples, 
but suffer from looser bounding constants. Nonetheless, they can provide a more detailed
qualitative characterization of the statistical difficulty of learning~\cite{tsiamis2023statistical}.

While finite-sample system identification has been studied before, most results are focused on time domain identification~\cite{goldenshluger1998nonparametric,faradonbeh2018finite_ID,simchowitz2018learning,oymak2021revisiting,sarkar2021finite,tsiamis2019finite,wagenmaker2020active,tu2022learning,ziemann2022learning,ziemann2022single}. Detailed related work and a tutorial on the subject can be found in~\cite{tsiamis2023statistical,ziemann2023tutorial}.
Frequency domain and time domain identification have many similarities--ignoring initial conditions, transients, or leakage effects, the two domains are equivalent from a prediction error framework perspective~\cite{schoukens2004time}. Still, working in one domain may offer some advantages over the other~\cite{schoukens2004time}. For example, the frequency domain approach allows a unified treatment of discrete and continuous time systems, simplifies the analysis of systems with delays, and offers a more explicit way of designing the input excitation.

\vspace{0.5em}
\noindent Our contributions are the following:
\vspace{0.5em}

\textbf{Finite-sample guarantees for the ETFE.} We provide finite sample guarantees for the well-established Empirical Transfer Function Estimate (ETFE)~\cite{Ljung1999system}, a non-parametric method for frequency domain identification, under open-loop periodic excitation. While the mean and variance of the ETFE have been characterized before, we provide guarantees on the distribution of the estimation error, the tail probabilities in particular. Under certain stability conditions, 
we prove that the estimation error decays with a rate of $\sqrt{M/\Ntot}$, where $\Ntot$ is the total number of samples. The parameter $M$ is the number of selected frequencies at which we estimate the frequency response; it controls the frequency resolution. The rate holds for general irrational transfer functions and does not require a finite order state-space representation, unlike prior non-asymptotic bounds~\cite{sarkar2021finite}.

\textbf{Guarantees in the $\mathcal{H}_\infty$ norm.} Based on our finite-sample bound, we tune the number of frequencies $M$ to provide guarantees for learning the frequency response across all frequencies. We provide a non-asymptotic rate of $\Ntot^{-1/3}$ in the $\mathcal{H}_\infty$ norm of the estimation error, which reflects optimal rates for non-parametric learning of Lipschitz functions~\cite{tsybakov2008introduction}.

\textbf{Extension of the Hanson-Wright inequality.} To prove our main result we have to deal with quadratic forms of a (countably) infinite number of sub-Gaussian variables. To achieve this, we extend the celebrated Hanson-Wright inequality~\cite{vershynin2018high,hanson1971bound} to semi-infinite matrices; that is, bounded operators mapping sequences to finite vector spaces.

Our paper is related to non-parametric system identification, which includes works on both time~\cite{care2023kernel} and frequency domain~\cite{devonport2023frequency}.  Using Gaussian Processes as in~\cite{devonport2023frequency}, where the unknown frequency response follows a Gaussian prior, we can also obtain finite sample guarantees. Here, we follow a different approach and we do not consider Gaussian priors. Note that in this work we focus on qualitative data-independent bounds linking sample requirements to system theoretic properties. Data-dependent bounds, which are arguably more suitable for applications, have also been studied before~\cite{campi2004non,ko2015non,baggio2022bayesian}.

\textbf{Notation. }Let $(H,\mathbb{F},\langle\cdot,\cdot\rangle_{H})$ be a Hilbert space with field $\mathbb{F}=\real$ or $\cplx$ and inner product $\langle \cdot,\cdot \rangle_H$. For any vector $x\in H$, let $\snorm{x}\triangleq \sqrt{\langle x,x \rangle_H}$ denote the inner product norm. Let $H,V$ be Hilbert spaces with $\mathbb{F}=\real$ or $\cplx$ and let $\mathcal{A}:H\rightarrow V$ be any linear map. Let $\opnorm{\mathcal{A}}\triangleq \sup_{\snorm{x}=1}\snorm{\mathcal{A}(x)}$ denote the operator norm and $\mathcal{A}^*$ denote the adjoint operator. If $\set{b_i}_{i\in\mathcal{I}}$ is an orthonormal basis, the Hilbert-Schmidt or Frobenius norm is defined as $\hsnorm{\mathcal A}^2\triangleq\sum_{i\in\mathcal{I}}\snorm{\mathcal{A}(b_i)}^2$. Let $\mathcal{L}_2(p)\triangleq \{\set{x_k}^{\infty}_{k=1}:x_k\in\real^p,\,\sum_{k=1}^{\infty}\snorm{x_k}^2<\infty \}$ be the Hilbert space of $p$-dimensional square summable sequences. A universal constant is a constant that is independent of the problem at hand, e.g., the system or the algorithm. For any integer $M$, let $[M]\triangleq{0,\dots,M-1}$. The $\mathcal{H}_{\infty}$ norm of $G(e^{j\omega}):[0,2\pi)\rightarrow \cplx^{d_1\times d_2}$ is given by $\sup_{\omega\in[0,2\pi)}\opnorm{G(e^{j\omega})}$; it is denoted by $\hinfnorm{G}$.

\section{Problem formulation}\label{sec:formulation}
Consider the input-output system~\eqref{eq:system}. We make the following assumption about the noise process $v_t$.
\begin{assumption}[Noise]\label{assum:noise}
    The noise process $v_t$ is filtered sub-Gaussian white noise, that is,
    \begin{equation}\label{eq:noise}
    v_t = H(\qq) e_t,\quad H(\qq)\triangleq \sum_{s=0}^{\infty}h_s \qq^{-s},
\end{equation}
where $h_t\in\mathbb{R}^{\dy\times\de}$ are the \emph{unknown} filter coefficients. Let $e_t\in\mathbb{R}^{\de}$ be i.i.d. zero mean, with covariance $\E{e_te_t^{\top}}=\sigma_e^2 I_{\de}$, and $K^2$-sub-Gaussian~\cite{vershynin2018high}, i.e., for any $\xi\in\real^{\de}$
\begin{equation}\label{eq:subgdef}
    \E \exp \left( \xi^\top e_t \right) \leq \exp\left( \frac{K^2\|\xi\|^2}{2} \right),
\end{equation}
for some $K>0$.
\end{assumption}
The noise process $v_t$ is colored. It is used to model measurement noise as well as any stochastic disturbances acting on the dynamical system.

We assume throughout that the input is bounded. This guarantees that any transient phenomena have a limited effect on the estimation problem.
\begin{assumption}[Input Bound]\label{assum:input_bounded}
 All inputs are bounded
    \[
    \norm{u_t}\le \Du,\,\text{ for all }t\in \mathbb{Z}
    \]
    for some $D_u>0$ independent of $t$.
\end{assumption}
We start all identification experiments at time $t=0$. Hence, the initial conditions are determined by all past signals $u_{-1},u_{-2},\dots$ and $e_{-1},e_{-2},\dots,$, which are nonzero in general, and unknown. 
Note that our formulation allows general irrational transfer functions and does not assume a state-space representation of finite dimension. 

\subsection{Empirical Transfer Function Estimate}
The goal of non-parametric frequency domain identification is to estimate the frequency response $G(e^{j\omega})$, given input-output data. We assume access to $\du$ experiments of length $N$, that is, data $(u^{(i)}_0,y^{(i)}_0,\dots,u^{(i)}_{N-1},y^{(i)}_{N-1})$, for $i=1,\dots,\du$. This brings the total number of samples to $\Ntot\triangleq\du N$. 
We assume that the trajectories are \emph{statistically independent} and we leave the single trajectory case for future work.

We are interested in the performance of the ETFE, which we review here.
Given any signal $z=\{z_t\}_{t\in[N]}$, let \[Z_k\triangleq \mathcal{F}^N_k(z)\triangleq \frac{1}{\sqrt{N}}\sum_{t=0}^{N-1}z_te^{-j\frac{2\pi k}{N}t},\,k\in[N]\] denote its $N$-point Discrete Fourier Transform (DFT), evaluated at $k$. 
Let $Y^{(i)}_k,U^{(i)}_k$ be the $N$ point DFTs of $y^{(i)}_t$ and $u^{(i)}_t$ respectively for the $i-$th experiment, $i=1,\dots,\du$. Let $Y_k\in\cplx^{\dy\times\du},U_k\in\cplx^{\du\times \du}$ denote the stacked DFTs for all experiments
\begin{equation}\label{eq:Yk_Uk_DFTs}
   \!\!\! Y_k\triangleq \begin{bmatrix}
        Y^{(1)}_k&\cdots&Y^{(\du)}_k
    \end{bmatrix},\,  \!U_k\triangleq \begin{bmatrix}
        U^{(1)}_k&\cdots&U^{(\du)}_k
    \end{bmatrix}\!.
\end{equation}
Then, an estimate of $G(e^{j\omega})$ at frequency $\omega_k = 2\pi k/N$, for $k =0, \dots, N-1$, can be obtained using the \emph{ETFE}
\begin{equation}\label{eq:ETFE}
\hat{G}_k \triangleq Y_kU^{-1}_k,
\end{equation}
provided that $U_k$ is invertible;
 the estimate is undefined if not. Since the number of frequencies $N$ scales with the number of data, it is generally impossible to estimate the responses at all frequencies consistently (without assuming structure)~\cite{Ljung1999system}. Instead, we can learn the responses at a smaller frequency set. Given a frequency-resolution parameter $M<N$, we  focus on estimating $G(e^{j\omega})$ at $\omega=2\pi \ell/M$, for $\ell\in[M]$. Based on the ETFE and under some additional Lipschitz assumptions on the frequency response, we can extend the estimation over all frequencies $\omega\in[0,2\pi)$.

\subsection{Excitation Method}
The estimation performance also depends on the excitation method. 
Since we only need to estimate the frequency responses at $2\pi \ell/M$, $\ell\in[M]$, it is sufficient to excite the system at only these frequencies~\cite{Ljung1999system}. Assuming that $M$ divides $N$, the DFT $U_k$ of the input can be non-zero at  only $2\pi k/N=2\pi \ell/M$ or $k=\ell N/M$. The latter condition is satisfied if and only if the excitation input is periodic with a period equal to $M$.
Note that we also need invertibility of $U_k$ at $k=\ell N/M$. To achieve this, we assume the following.

\begin{assumption}[Excitation]\label{assum:excitation}
     Let the input signals be periodic with period $M$ such that $u^{(i)}_{t+M}=u^{(i)}_{t}$, for $t\ge 0$ and every experiment $i=1,\dots,\du$. Assume that $M$ divides $N$ with $N_p\triangleq N/M$. Consider \emph{one period} of the input signals and let the respective $M$-point DFTs be
    \[
\tilde{U}^{(i)}_\ell=\mathcal{F}^M_\ell(u^{(i)})=\frac{1}{\sqrt{M}}\sum_{t=0}^{M-1}u^{(i)}_te^{-\frac{2\pi \ell}{M}t}, i=1,\dots,\du
    \]
    for $\ell\in[M]$, with respective stacked DFTs
    \[
\tilde U_\ell\triangleq \begin{bmatrix}
         \tilde U^{(1)}_\ell&\cdots& \tilde U^{(\du)}_\ell
        \end{bmatrix}.
    \]
    Assume that for all $\ell\in[M]$ the stacked DFTs satisfy   \begin{equation}\label{eq:excitation_condition}
        \sigma_{u,\ell}^2 I_{\du}\preceq \tilde{U}_\ell\tilde{U}^*_\ell,
    \end{equation}
    for some $0<\sigma^2_{u,\ell}$. 
\end{assumption}
By definition, for Assumptions~\ref{assum:input_bounded},~\ref{assum:excitation} to be consistent, we need $\sum_{\ell=0}^{M-1}\sigma^2_{u,\ell}\le M\Du^2$, where $\Du$ is the input upper bound of Assumption~\ref{assum:input_bounded}.

Such assumptions are standard when dealing with experiment design in frequency domain. For example, Assumption~\ref{assum:excitation} is satisfied by design (with uniform $\sigma^2_{u,\ell}$ across $\ell\in[M]-\{0\}$) when pseudorandom binary sequence (PRBS) signals are used and we excite one input at a time~\cite[Ch. 13]{Ljung1999system}. Another choice could be multisine signals~\cite{dobrowiecki2006optimized}, where the user simply designs the input to have sinusoids with non-zero amplitudes at the required frequencies. 
Another option is to design the input spectrum and generate the input by passing a white noise realization through the spectral factor~\cite{bombois2011optimal}.

 \subsection{Objective}
We can now state our objective, which is providing finite-sample guarantees for estimating the frequency responses. We focus on $\epsilon-\delta$ probabilistic guarantees, where $\epsilon$ controls the estimation accuracy and $\delta$ controls the confidence. 
\begin{mdframed}[roundcorner=3pt, backgroundcolor=blue!6,innertopmargin=-2pt]
\begin{problem}[Finite-Sample ETFE]\label{prob:ETFE}
Fix a frequency resolution $M<N$ such that $M$ divides $N$ and denote their ratio by  $\Np=N/M$. Consider $\du$ independent input-ouput trajectories of length $N$ $\{u^{(i)}_t,y^{(i)}_t\}_{t\in[N]}$, for $i=1,\dots,\du$,  generated by system~\eqref{eq:system} with excitation inputs as in Assumption~\ref{assum:excitation}. 
Fix a failure probability $0<\delta<1$. Determine $\epsilon_\ell>0$, $\ell\in[M]$ such that
\[
\Pr\paren{\opnorm{G(e^{j\frac{2\pi\ell}{M}})-\hat{G}_{\ell \Np}}\ge \epsilon_\ell,\,\forall \ell\in[M]}\le \delta,
\]
where the ETFE $\hat{G}_{\ell \Np}$ is defined in~\eqref{eq:ETFE}.
\end{problem}
\end{mdframed}
Problem~\ref{prob:ETFE} only focuses on the desired discretized frequency grid $\{2\pi \ell/M\}_{\ell\in[M]}$. In Section~\ref{sec:uniform_guarantees}, we also study uniform guarantees over all frequencies in the $\mathcal{H}_{\infty}$ norm.

To guarantee a well-defined estimation problem, we consider the following stability conditions.

\begin{assumption}[Strict Stability]\label{assum:strict_stability}
    The input-output impulse response is strictly stable~\cite{Ljung1999system}, that is, 
    \begin{equation}\label{eq:system_strictly_stable}
\snorm{G}_{\star}\triangleq \sum_{t=0}^\infty t\opnorm{g_t}<\infty.
    \end{equation}
    The auto-correlation function of the noise
    $R_t\triangleq \E v_{s}v_{s-t}^\top$ is also strictly stable    \begin{equation}\label{eq:autocorr_noise_strictly_stable}
\snorm{R}_{\star}\triangleq \sum_{t=0}^\infty t\opnorm{R_t}<\infty.
    \end{equation}
\end{assumption}
Strict stability guarantees that the derivative of the frequency response $\partial G(e^{j\omega})/\partial\omega$ is uniformly bounded over all frequencies. This, in turn, implies that the response $G(e^{j\omega})$ is Lipschitz. Strict stability also guarantees that the transient phenomena have a limited effect on the estimation procedure. 
\section{Finite-sample guarantees for the ETFE}\label{sec:finite_sample}
In this section, we focus on estimating the frequency response at $\{2\pi \ell/M\}_{\ell\in[M]}$, that is, the selected frequencies.
Following the convention of~\eqref{eq:Yk_Uk_DFTs}, we define the stacked DFTs of the noises and the noiseless outputs as
\[
V_k \triangleq \begin{bmatrix}
        V^{(1)}_k&\cdots&V^{(\du)}_k
    \end{bmatrix},\,\bar Y_k \triangleq \begin{bmatrix}
        \bar Y^{(1)}_k&\cdots&\bar Y^{(\du)}_k
    \end{bmatrix}
\]
Then, for every frequency $\omega_k=2\pi k/N$ we have
\begin{equation}\label{eq:system_frequency_domain}
Y_k=\bar{Y}_k+V_k=G(e^{j\omega_k})U_k+V_k+T_{k,N},
\end{equation}
where $T_{k,N}=\bar{Y}_k-G(e^{j\omega_k})U_k$ accounts for transient and time-aliasing phenomena since the DFT of $\{\bar{y}^{(i)}_t\}_{t=0}^{N-1}$ is different from $\{G(e^{j\omega_k})U^{(i)}_k\}_{k=0}^{N-1}$ for finite $N$. This term vanishes as $N$ grows to infinity.

\begin{remark}
The above relation fits the framework of non-parametric function estimation. However, there are some notable differences with standard formulations~\cite{tsybakov2008introduction,wainwright2019high}. First, we have the presence of the input $U_k$, which affects the signal-to-noise ratio (SNR) and is an additional degree of freedom. For example, if the input matrix is not invertible at some $k$, we do not get a well-defined sample of $G(e^{j\omega_k})$. Second, the noise $V_k$ is heteroscedastic since its variance depends on the frequency $k$. Moreover, the sequence $V_k,\,k\in[N]$ is non-Gaussian and non-independent across frequencies for finite samples $N$ (only asymptotically as $N$ goes to infinity). Hence, the non-asymptotic techniques of~\cite[Ch. 13]{wainwright2019high} do not apply directly. 
\end{remark}

The estimation error  is equal to
\begin{equation}\label{eq:estimation_error}
\begin{aligned}
\hat{G}_k-G(e^{j\omega_k})&=V_kU^{-1}_k+T_{k,N}U^{-1}_k,
\end{aligned}
\end{equation}
where the input matrix $U_k$ is invertible, and we only look at the frequencies $k=\ell N_p$, $\ell\in[M]$. Let $\Phi_{v,N}(k)\triangleq \E V^{(i)}_k(V^{(i)}_k)^*$ be the aliased power spectrum of the process $v_t$ at frequency $k$, where due to independence, the experiment index $i$ does not affect the definition. Define the signal-to-noise ratio (SNR) at frequency $k=\ell N_p,$ $\ell\in[M]$ as
\begin{equation}\label{eq:snr}
\mathrm{SNR}_{k,N}\triangleq \frac{\sigma_{u,\ell}}{\sqrt{\opnorm{\Phi_{v,N}(k)}}},
\end{equation}
where $\opnorm{\Phi_{v,N}(k)}$ is interpreted as the matrix norm for fixed $k$.
We obtain the following finite-sample guarantees.
\begin{theorem}[ETFE Finite-Sample] \label{thm:finite_sample_ETFE}
Consider Problem~\ref{prob:ETFE} and fix a failure probability $\delta>0$. Under Assumptions~\ref{assum:noise}-\ref{assum:strict_stability}, with probability at least $1-\delta$ for all $k=\ell N_p$, $\ell\in[M]$
  \begin{align}\label{eq:HW_ETFE}
&\opnorm{G(e^{j\omega_k})-\hat{G}_k}\le \frac{2\snorm{G}_\star D_u\sqrt{M}}{\sigma_{u,\ell} N}\\&+\frac{\sqrt{M}}{\sqrt{N}}\mathrm{SNR}^{-1}_{k,N}\paren{\sqrt{\dy}+c\frac{K^2}{\sigmae^2}\sqrt{\du+\log M/\delta}}
\nonumber
\end{align}
where $c$ is a universal constant, $\snorm{G}_\star$ is defined in~\eqref{eq:system_strictly_stable}, and $\Du$ is the maximum input norm.
\end{theorem}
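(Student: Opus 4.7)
The plan is to start from the decomposition~\eqref{eq:estimation_error}, $\hat G_k-G(e^{j\omega_k})=V_kU_k^{-1}+T_{k,N}U_k^{-1}$, bound the transient and noise contributions separately, and then take a union bound over the $M$ selected frequencies $k=\ell N_p$. The first summand in~\eqref{eq:HW_ETFE} will come from the deterministic transient analysis, and the second from a concentration argument based on the announced extension of the Hanson-Wright inequality applied to the semi-infinite noise operator.

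The deterministic ingredients are straightforward. Because $u^{(i)}$ has period $M$ and $N=N_pM$, the $N$-point DFT collapses to $U_{\ell N_p}=\sqrt{N_p}\,\tilde U_\ell$; combining this with Assumption~\ref{assum:excitation} gives $\opnorm{U_{\ell N_p}^{-1}}\le\sqrt{M/N}/\sigma_{u,\ell}$. For the transient, substituting $\bar y^{(i)}_t=\sum_{s\ge 0}g_s u^{(i)}_{t-s}$ into the definition of $\bar Y^{(i)}_k$ and swapping the $s$- and $t$-sums telescopes $\bar Y^{(i)}_k-G(e^{j\omega_k})U^{(i)}_k$ into a sum of boundary terms contributing at most $2s$ unit-norm inputs for each coefficient $g_s$; under Assumption~\ref{assum:input_bounded} this yields the per-experiment bound $\snorm{T^{(i)}_{k,N}}\le 2D_u\snorm{G}_\star/\sqrt N$, which after multiplication by the $U_k^{-1}$ bound produces the first summand of~\eqref{eq:HW_ETFE}.

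For the noise it suffices to prove
\[
\opnorm{V_k}\le\sqrt{\opnorm{\Phi_{v,N}(k)}}\Bigl(\sqrt{\dy}+c\tfrac{K^2}{\sigma_e^2}\sqrt{\du+\log(M/\delta)}\Bigr)
\]
with probability at least $1-\delta/M$ per frequency. I write $V_k^{(i)}=\mathcal A_k e^{(i)}$, where $\mathcal A_k\colon\mathcal L_2(\de)\to\cplx^{\dy}$ is the semi-infinite operator encoding the $N$-point DFT of $H(\qq)$ applied to the infinite past of $e^{(i)}$; stability of $R_t$ in Assumption~\ref{assum:strict_stability} makes its columns square-summable, and a direct calculation gives $\mathcal A_k\mathcal A_k^*=\Phi_{v,N}(k)/\sigma_e^2$. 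For any fixed unit $b\in\cplx^{\du}$, $\snorm{V_k b}^2=e^*Q_b e$ is a Hermitian quadratic form in the stacked sequence $e=(e^{(i)})_{i=1}^{\du}$, with $\E e^*Q_b e=\tr\Phi_{v,N}(k)\le\dy\opnorm{\Phi_{v,N}(k)}$, $\opnorm{Q_b}\le\opnorm{\Phi_{v,N}(k)}/\sigma_e^2$ and $\hsnorm{Q_b}\le\sqrt{\dy}\,\opnorm{\Phi_{v,N}(k)}/\sigma_e^2$. Applying the semi-infinite Hanson-Wright inequality and using $\sqrt{a+b}\le\sqrt a+\sqrt b$ gives a pointwise tail in which the mean contributes the $\sqrt{\dy}$ piece and the deviation contributes a sub-exponential tail scaled by $K^2/\sigma_e^2$; upgrading to the operator norm with a $1/4$-covering net of the unit sphere in $\cplx^{\du}$ (of cardinality exponential in $\du$) and union-bounding over the net and the $M$ frequencies replaces the free tail parameter by $c(\du+\log(M/\delta))$, yielding the displayed bound and hence the second summand of~\eqref{eq:HW_ETFE}. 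The main obstacle is the Hanson-Wright step itself: since $\mathcal A_k$ acts on a sequence space, one must justify the quadratic-form concentration for semi-infinite matrices---either by truncating the past of $e^{(i)}$ and passing to a limit that preserves both $\opnorm{\cdot}$ and $\hsnorm{\cdot}$ bounds, or by working directly with trace-class operators on $\mathcal L_2(\de)$---which is exactly the extension of~\cite{hanson1971bound,vershynin2018high} advertised as one of the paper's contributions and is what frees the rate from any assumption of a finite-dimensional state-space realization, in contrast to~\cite{sarkar2021finite}.
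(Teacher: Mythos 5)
Your proposal is correct and follows essentially the same route as the paper: the decomposition $\hat G_k-G(e^{j\omega_k})=V_kU_k^{-1}+T_{k,N}U_k^{-1}$, the periodicity/input-energy bound $\opnorm{U_{\ell N_p}^{-1}}\le\sqrt{M/N}/\sigma_{u,\ell}$, the deterministic transient bound, and a semi-infinite Hanson--Wright inequality (proved by truncation and a limiting argument) combined with a $1/4$-net over the complex unit sphere and a union bound over the $M$ frequencies, with $\tr(\Phi_{v,N}(k))\le\dy\opnorm{\Phi_{v,N}(k)}$ supplying the $\sqrt{\dy}$ term. The only cosmetic difference is that the paper explicitly lifts the complex-valued map to $\real^{2\dy}$ before applying the (real) Hanson--Wright inequality, a detail your sketch glosses over but which changes nothing substantive.
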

The exact value of the universal constant can be found in the proof. The first term  of the right-hand side captures the transient error $T_{k,N}U^{-1}_k$, while the second one captures the error $V_kU^{-1}_k$ due to stochastic noise. Recall that the total number of samples is equal to $\Ntot=\du N$. As we increase the number of samples $\Ntot$ while keeping $M$ constant, the former term decays at a faster rate of $1/\Ntot$ compared to the latter's $1/\sqrt{\Ntot}$.  Hence, the non-asymptotic rate is
\[
\mathcal{O}\left(\frac{\sqrt{M}}{\sqrt{\Ntot}}\sqrt{\du}\big(\sqrt{\du+\log M/\delta}+\sqrt{\dy}\big)\right).
\]
The rate is similar to the ones for non-asymptotic parametric identification in time-domain~\cite{ziemann2023tutorial}; the optimal rate in that line of work is typically of the order $\sqrt{d}$ for some $d$ scaling with the number of unknown parameters. Here, we have a similar scaling of $\sqrt{M}(\sqrt{\du}+\sqrt{\dy})$ (ignoring log terms) times an additional $\sqrt{\du}$ dimensional dependence. This is an artifact of imposing a strict input norm bound in Assumption~\ref{assum:input_bounded}. If we allow $\sigmau, \Du$ to scale with $\sqrt{\du}$ (as is the case for white-noise inputs in the time-domain~\cite{ziemann2023tutorial}), we can remove this extra term. 

A benefit of frequency-domain identification is that it provides specialized guarantees for every frequency of interest by breaking down the SNR into SNRs for every frequency. This offers direct insights on which frequencies to focus on and how to design the excitation inputs. Note that the inverse $\mathrm{SNR}^{-1}_{k,N}$ is upper bounded and converges to a limit as $N$ grows to infinity. This is a consequence of the following result which exploits the strict stability condition~\eqref{eq:autocorr_noise_strictly_stable}.
\begin{lemma}[Stochastic Transient~{\cite[Sec. 6.3]{Ljung1999system}}]\label{lem:stochastic_transient}
Denote the power spectrum of the noise at frequency $\omega_k=2\pi k/N$, for some $k\in[N]$, by $\Phi_v(k)\triangleq \sum_{t=-\infty}^{\infty}R_{t}e^{-j\omega_k t}$, with $R_{-t}=R_{t}^\top$.
  We have
    \[
\sup_{k\in[N]}\opnorm{\Phi_v(k)-\Phi_{v,N}(k)}\le 2\snorm{R}_{\star}/N.
    \]
\end{lemma}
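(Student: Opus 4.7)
The plan is to compute $\Phi_{v,N}(k)$ explicitly, recognize it as a windowed (Bartlett-weighted) version of the true spectrum $\Phi_v(k)$, and then bound the difference by a tail-plus-bias argument controlled by $\snorm{R}_\star$.

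First I would expand the definition. Using $V^{(i)}_k = N^{-1/2}\sum_{t=0}^{N-1} v^{(i)}_t e^{-j\omega_k t}$ and $\E v^{(i)}_t (v^{(i)}_s)^\top = R_{t-s}$, we obtain
\[
\Phi_{v,N}(k) = \frac{1}{N}\sum_{t=0}^{N-1}\sum_{s=0}^{N-1} R_{t-s}\, e^{-j\omega_k(t-s)}.
\]
Substituting $\tau = t-s$ and counting the number of pairs with a given difference (which is $N - |\tau|$ for $|\tau|\le N-1$) yields the Bartlett-window form
\[
\Phi_{v,N}(k) = \sum_{\tau=-(N-1)}^{N-1}\left(1 - \frac{|\tau|}{N}\right) R_\tau\, e^{-j\omega_k \tau}.
\]

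Next I would compare this with $\Phi_v(k) = \sum_{\tau\in\mathbb{Z}} R_\tau e^{-j\omega_k\tau}$ and split the difference into a bias term on $|\tau|<N$ and a tail term on $|\tau|\ge N$:
\[
\Phi_v(k) - \Phi_{v,N}(k) = \sum_{|\tau|<N}\frac{|\tau|}{N} R_\tau e^{-j\omega_k\tau} + \sum_{|\tau|\ge N} R_\tau e^{-j\omega_k\tau}.
\]
Taking operator norms, noting that $|e^{-j\omega_k\tau}|=1$ so the bound is uniform in $k$, and using $\opnorm{R_{-\tau}}=\opnorm{R_\tau^\top}=\opnorm{R_\tau}$, the triangle inequality gives
\[
\opnorm{\Phi_v(k)-\Phi_{v,N}(k)} \le \frac{1}{N}\sum_{|\tau|<N}|\tau|\,\opnorm{R_\tau} + \sum_{|\tau|\ge N}\opnorm{R_\tau}.
\]
In the second sum I would use the crude bound $1 \le |\tau|/N$ for $|\tau|\ge N$ to absorb it into a single sum, and then symmetrize:
\[
\opnorm{\Phi_v(k)-\Phi_{v,N}(k)} \le \frac{1}{N}\sum_{\tau\in\mathbb{Z}}|\tau|\,\opnorm{R_\tau} = \frac{2}{N}\sum_{t=0}^{\infty} t\,\opnorm{R_t} = \frac{2\snorm{R}_\star}{N},
\]
which yields the claim after taking the supremum over $k$.

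There is no real obstacle here: the only points requiring care are the combinatorial identity for the number of lag pairs (which produces the Bartlett weighting and thus the factor $|\tau|/N$ in the bias), and the symmetry $\opnorm{R_{-\tau}} = \opnorm{R_\tau}$ which doubles the one-sided sum in $\snorm{R}_\star$. Assumption~\ref{assum:strict_stability} ensures convergence of the infinite sum. The uniformity in $k$ is automatic from the unit modulus of the complex exponentials.
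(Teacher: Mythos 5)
Your proof is correct: the Bartlett-window identity $\Phi_{v,N}(k)=\sum_{|\tau|<N}(1-|\tau|/N)R_\tau e^{-j\omega_k\tau}$, the bias-plus-tail split, the crude bound $1\le|\tau|/N$ on the tail, and the symmetrization via $\opnorm{R_{-\tau}}=\opnorm{R_\tau}$ all check out and yield exactly $2\snorm{R}_\star/N$ uniformly in $k$. The paper does not include its own proof (it cites Ljung, Sec.~6.3), but your argument is precisely the standard one behind that reference, so there is nothing further to compare.
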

In the remainder of the section, we provide a sketch of the proof of Theorem~\ref{thm:finite_sample_ETFE}. We simply bound every term that appears in~\eqref{eq:estimation_error} separately.
\subsection{Deterministic transient}
Even in the absence of any stochastic noise, the ETFE suffers from estimation errors due to transient phenomena (e.g. aliasing, leakage)~\cite{schoukens2004time}. Fortunately,
the transient error $T_{k,N}=\bar{Y}_k-G(e^{j\omega_k})U_k$ decays uniformly to zero as the DFT horizon $N$ goes to infinity. 
\begin{lemma}[Deterministic Transient~{\cite[Sec. 2.2]{Ljung1999system}}]\label{lem:deterministic_transient}
     We have
    \[
\sup_{k\in[N]}\opnorm{T_{k,N}}\le 2\snorm{G}_{\star}\Du/\sqrt{N}.
    \]
\end{lemma}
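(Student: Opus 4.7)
The plan is to unroll $\bar Y_k$ and $G(e^{j\omega_k})U_k$ as sums and isolate the boundary terms that obstruct perfect cancellation. Intuitively, $\bar Y_k$ is built from the linear convolution $\bar y_t=\sum_{s\ge 0}g_s u_{t-s}$ truncated to $t\in[N]$, whereas $G(e^{j\omega_k})U_k$ is the circular-convolution analogue; the mismatch has two sources indexed by the lag $s$, namely pre-experiment inputs $u_\tau$ for $\tau<0$ (initial-condition tails) and tail inputs $u_\tau$ for $\tau\in[N-s,N-1]$ (leakage/wrap-around).

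Concretely, I would proceed in three steps. First, substitute $\bar y_t=\sum_{s\ge 0} g_s u_{t-s}$ into the definition of $\bar Y_k$, swap the order of summation, and reindex the inner $t$-sum by $\tau=t-s$, which yields $\sqrt{N}\,\bar Y_k=\sum_{s\ge 0}g_s e^{-j\omega_k s}\sum_{\tau=-s}^{N-1-s}u_\tau e^{-j\omega_k \tau}$. Second, write $\sqrt{N}\,G(e^{j\omega_k})U_k=\sum_{s\ge 0}g_s e^{-j\omega_k s}\sum_{\tau=0}^{N-1}u_\tau e^{-j\omega_k \tau}$ and subtract termwise in $s$. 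For $s\le N-1$ the two $\tau$-ranges overlap on $[0,N-1-s]$, and the difference collapses to two length-$s$ boundary sums, $\sum_{\tau=-s}^{-1}u_\tau e^{-j\omega_k \tau}-\sum_{\tau=N-s}^{N-1}u_\tau e^{-j\omega_k \tau}$; for $s\ge N$ the two $\tau$-ranges are disjoint and jointly contain $2N\le 2s$ terms. Third, apply the triangle inequality together with Assumption~\ref{assum:input_bounded} to bound each boundary bracket in norm by $2s\Du$ (or $2N\Du\le 2s\Du$), multiply by $\opnorm{g_s}$, and sum to obtain $\sqrt{N}\,\opnorm{T_{k,N}}\le \sum_{s\ge 0}\opnorm{g_s}\cdot 2s\Du = 2\Du\snorm{G}_\star$.

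The main obstacle is the bookkeeping in the second step: correctly tracking which exponentials in the two $\tau$-sums cancel and treating the regime $s\ge N$ uniformly with $s\le N-1$, so that a single per-$s$ bound of the form $2s\Du$ emerges. Once the decomposition is written cleanly, the conclusion is immediate, and the weight $t$ in $\snorm{G}_\star=\sum_t t\opnorm{g_t}$ is explained precisely by the length-$s$ of each boundary sum. Since the bound does not depend on $k$, taking the supremum over $k\in[N]$ completes the lemma.
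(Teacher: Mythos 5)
Your proposal is correct and is precisely the classical argument the paper defers to (it gives no proof of its own, citing Ljung, Sec.~2.2): unrolling the convolution, cancelling on the overlap, and bounding the two length-$s$ boundary sums by $2s\Du\opnorm{g_s}$ before summing to $2\Du\snorm{G}_\star$ is the standard derivation, and your handling of the $s\ge N$ regime is right. The one caveat is that your bound is per experiment (per column of $T_{k,N}$), while the lemma is stated for the stacked $\dy\times\du$ matrix, for which the same argument naively picks up an extra $\sqrt{\du}$ in the operator norm --- an imprecision already present in the paper's statement rather than a gap in your argument.
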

The above result is a consequence of bounded inputs and strict stability (Assumptions~\ref{assum:input_bounded},~\ref{assum:strict_stability}). Together they guarantee that the deterministic transient phenomena have a vanishing effect on the estimation procedure.

\subsection{Input energy}
Next, we review a standard result for periodic inputs. The input  $U_k$ at frequencies $k=\ell N_p$ is equal to $\sqrt{N_p}\tilde{U}_\ell$, where $\tilde{U}_\ell$ is the $M-$point DFT based on one period of the input signals (see Assumption~\ref{assum:excitation}). As a result, $U^{-1}_k$ is well-defined and vanishes to zero with $N^{-1/2}_p$. This phenomenon is a direct consequence of periodicity and the properties of DFT.
\begin{lemma}[Input energy]\label{lem:PE}
    Let Assumption~\ref{assum:excitation} be in effect. Then, for all $k=N_p\ell$,  $\ell\in[M]$
    \begin{equation}
        \sigma_{u,\ell}^2 N_p I_{\du}\preceq U_kU^*_k,\text{ for }k=\ell N_p.
    \end{equation}
\end{lemma}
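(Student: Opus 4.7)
The plan is to reduce the $N$-point stacked DFT $U_k$ at the subsampled frequencies $k = \ell N_p$ to the $M$-point stacked DFT $\tilde U_\ell$ of one period, and then invoke Assumption~\ref{assum:excitation} directly. The fundamental identity is $U_{\ell N_p} = \sqrt{N_p}\,\tilde U_\ell$; once that is established, the lemma follows by taking $U_k U_k^*$ and multiplying through.

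To prove the identity, I would work one experiment at a time and split the $N = M N_p$ summation indices as $t = s + p M$ with $s \in [M]$ and $p \in [N_p]$. For the $i$-th experiment,
\begin{align*}
U^{(i)}_{\ell N_p}
&= \frac{1}{\sqrt{N}} \sum_{p=0}^{N_p - 1} \sum_{s=0}^{M-1} u^{(i)}_{s+pM} \, e^{-j\frac{2\pi \ell N_p}{N}(s + pM)} \\
&= \frac{1}{\sqrt{N}} \sum_{s=0}^{M-1} u^{(i)}_s \, e^{-j\frac{2\pi \ell}{M} s} \sum_{p=0}^{N_p - 1} e^{-j 2\pi \ell p},
\end{align*}
where I used periodicity $u^{(i)}_{s+pM} = u^{(i)}_s$ (Assumption~\ref{assum:excitation}) and the simplification $\ell N_p \cdot pM / N = \ell p$. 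The inner geometric sum equals $N_p$ exactly (since $\ell \in [M]$ is an integer, making each term equal to $1$), and $N_p/\sqrt{N} = \sqrt{N_p/M}$, so
\[
U^{(i)}_{\ell N_p} = \sqrt{N_p} \cdot \frac{1}{\sqrt{M}} \sum_{s=0}^{M-1} u^{(i)}_s e^{-j\frac{2\pi \ell}{M} s} = \sqrt{N_p}\,\tilde U^{(i)}_\ell .
\]
Stacking across experiments according to the definition in~\eqref{eq:Yk_Uk_DFTs} gives $U_{\ell N_p} = \sqrt{N_p}\,\tilde U_\ell$.

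Finally, taking the outer product yields
\[
U_k U_k^* = N_p \,\tilde U_\ell \tilde U_\ell^* \succeq N_p\, \sigma_{u,\ell}^2 I_{\du},
\]
where the inequality is Assumption~\ref{assum:excitation} scaled by $N_p$. There is no real obstacle here; the only thing to be careful about is the bookkeeping on the DFT normalization (keeping track of the factors $1/\sqrt{N}$ versus $1/\sqrt{M}$) and verifying that $\ell N_p \cdot pM / N$ is an integer, which is immediate from $N = M N_p$.
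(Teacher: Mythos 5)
Your proof is correct and follows essentially the same route as the paper: split the time index over periods, use periodicity to factor out the geometric sum (which equals $N_p$ at $k=\ell N_p$), conclude $U_{\ell N_p}=\sqrt{N_p}\,\tilde U_\ell$, and apply Assumption~\ref{assum:excitation}. The normalization bookkeeping ($N_p/\sqrt{N}=\sqrt{N_p/M}$) is handled correctly.
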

Lemma~\ref{lem:PE} is key to achieving consistency. While the known input is periodic, the noise is not. Hence, at the selected frequencies, the noise is averaged out.

\subsection{Noise concentration}
Finally, we bound the noise term $V_k$  by showing that its norm concentrates around $\sqrt{\tr(\Phi_{v,N}(k))}$. 
\begin{lemma}[Concentration of noise]\label{lem:noise_concentration} Fix a $k\in[N]$. Under Assumption~\ref{assum:noise}, for any $s>0$
    \begin{multline}
    \Pr\set{\abs{\opnorm{V_k}-\sqrt{\tr(\Phi_{v,N}(k))}}>2s\sqrt{\tr(\Phi_{v,N}(k))}}\le 9^{2\du} 2    \exp\paren{-\frac{1}{144}s^2\frac{\sigmae^4 }{K^4}\frac{\tr(\Phi_{v,N}(k))}{\opnorm{\Phi_{v,N}(k)}}}.
\end{multline}
\end{lemma}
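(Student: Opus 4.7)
My proof would combine the standard $\epsilon$-net argument for the operator norm of $V_k$, namely $\opnorm{V_k}=\sup_{w\in\cplx^{\du},\,\|w\|=1}\|V_k w\|$, with the semi-infinite Hanson--Wright inequality announced as the paper's main technical contribution. The pre-factor $9^{2\du}$ in the target tail exactly matches the cardinality bound for a $1/4$-net of the unit sphere of $\cplx^{\du}$, seen as a manifold of real dimension $2\du$; so the structure of the final bound essentially dictates the choice of discretization, and what remains is to concentrate $\|V_k w\|$ for a single unit vector $w$ and to union-bound.

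\textbf{Step 1: a semi-infinite quadratic form for $\|V_k w\|^2$.} Fix $w$ on the net and substitute $v^{(i)}_t=\sum_{s\ge 0}h_s e^{(i)}_{t-s}$ into the DFT of $v^{(i)}$. Collecting coefficients and stacking all the innovations $\mathbf e \triangleq \{e^{(i)}_\tau:i\le\du,\,\tau\le N-1\}$ into a single element of a sequence space, one gets
\[
V_k w \;=\; \frac{1}{\sqrt N}\sum_{i=1}^{\du}\sum_{t=0}^{N-1}\sum_{s=0}^{\infty} w_i\,h_s\,e^{(i)}_{t-s}\,e^{-j\omega_k t} \;=\; \A_w(\mathbf e),
\]
for a bounded linear operator $\A_w$ mapping that (countably infinite) sequence space into $\cplx^{\dy}$; boundedness follows from Assumption~\ref{assum:noise} together with strict stability~\eqref{eq:autocorr_noise_strictly_stable}. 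Independence across experiments and $\|w\|=1$ give the identity $\sigma_e^2\,\A_w\A_w^*=\sum_i|w_i|^2\,\Phi_{v,N}(k)=\Phi_{v,N}(k)$, hence
\[
\tr(\A_w^*\A_w)=\sigma_e^{-2}\tr(\Phi_{v,N}(k)),\qquad \opnorm{\A_w^*\A_w}=\sigma_e^{-2}\opnorm{\Phi_{v,N}(k)},
\]
and $\hsnorm{\A_w^*\A_w}^{2}\le\opnorm{\A_w^*\A_w}\tr(\A_w^*\A_w)=\sigma_e^{-4}\opnorm{\Phi_{v,N}(k)}\tr(\Phi_{v,N}(k))$.

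\textbf{Step 2: Hanson--Wright, square-root conversion, union bound.} Apply the extended Hanson--Wright inequality to the Hermitian quadratic form $\|V_k w\|^2=\mathbf e^*\A_w^*\A_w\,\mathbf e$, whose mean equals $\sigma_e^2\tr(\A_w^*\A_w)=\tr(\Phi_{v,N}(k))$, to obtain a Bernstein tail of the form $2\exp\bigl(-c\min(t^2 K^{-4}\hsnorm{\A_w^*\A_w}^{-2},\,t K^{-2}\opnorm{\A_w^*\A_w}^{-1})\bigr)$. Taking $t$ proportional to $s\,\tr(\Phi_{v,N}(k))$ keeps the argument in the quadratic (sub-Gaussian) branch and, after plugging in the norm identities of Step~1, turns the exponent into the factor $\sigma_e^4 K^{-4}\,\tr(\Phi_{v,N}(k))/\opnorm{\Phi_{v,N}(k)}$ appearing in the lemma. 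The elementary inequality $\bigl|\|V_k w\|-\sqrt{\tr(\Phi_{v,N}(k))}\bigr|\le\bigl|\|V_k w\|^2-\tr(\Phi_{v,N}(k))\bigr|/\sqrt{\tr(\Phi_{v,N}(k))}$, valid for non-negative quantities, then converts the squared-norm bound into one for $\|V_k w\|$ at level $2s\sqrt{\tr(\Phi_{v,N}(k))}$. A standard net bound $\opnorm{V_k}\le (1-1/4)^{-1}\max_{w\in\mathcal N_{1/4}}\|V_k w\|$ (and the symmetric lower tail) followed by a union bound over the $\le 9^{2\du}$ directions gives the stated probability estimate.

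\textbf{Main obstacle.} The only genuine difficulty is Step~2. Classical Hanson--Wright is stated for finite random vectors and finite matrices, whereas $\A_w$ has countably infinite domain because $v_t$ depends on the entire infinite past $\{e^{(i)}_\tau\}_{\tau\le N-1}$. The clean derivation is to truncate the inner sum in Step~1 at horizon $s\le L$, apply the classical inequality to the resulting finite-dimensional matrix, and pass $L\to\infty$, with strict stability of $H$ (implied by~\eqref{eq:autocorr_noise_strictly_stable}) ensuring convergence of both $\hsnorm{\cdot}$ and $\opnorm{\cdot}$; the factor $\sigma_e^2/K^2$ then arises naturally from the sub-Gaussian normalization of $e_t$ versus its true covariance. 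The paper isolates this extension as a stand-alone result precisely because this is the single non-routine component of the argument.
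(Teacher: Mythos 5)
Your overall strategy --- represent $V_k w$ as a semi-infinite linear map of the stacked innovation sequence, invoke the extended Hanson--Wright inequality for a fixed direction, and discretize the complex unit sphere of $\cplx^{\du}$ with a $1/4$-net of cardinality $9^{2\du}$ --- is exactly the paper's, including the norm identities $\hsnorm{\A_w}^2=\sigmae^{-2}\tr(\Phi_{v,N}(k))$ and $\opnorm{\A_w}^2\le\sigmae^{-2}\opnorm{\Phi_{v,N}(k)}$ and the truncate-and-pass-to-the-limit treatment of the infinite past. (One elision: Theorem~\ref{thm:HW} is stated for real-valued maps, so the complex vector $V_k\xi$ must be lifted to $\real^{2\dy}$ via its real and imaginary parts before applying it; this is routine but the paper does it explicitly.) However, two steps as you describe them would not deliver the stated bound.

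First, the net must be applied to the \emph{centered} Hermitian form $V_k^*V_k-\tr(\Phi_{v,N}(k))I_{\du}$, not to $\snorm{V_kw}$ itself. Your bound $\opnorm{V_k}\le(1-1/4)^{-1}\max_{w\in\mathcal N}\snorm{V_kw}$ incurs a multiplicative factor $4/3$ on the \emph{norm}, so even if every $\snorm{V_kw}$ were within $(1+2s)\sqrt{\tr(\Phi_{v,N}(k))}$ you could only conclude $\opnorm{V_k}\le\tfrac43(1+2s)\sqrt{\tr(\Phi_{v,N}(k))}$, which is vacuous relative to the claimed deviation $2s\sqrt{\tr(\Phi_{v,N}(k))}$ whenever $s<1/6$. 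The paper instead uses the covering lemma in the form $\Pr(\opnorm{A}\ge t)\le\abs{\mathcal N}\max_z\Pr(\abs{z^*Az}\ge(1-2\epsilon)t)$ for $A=V_k^*V_k-\tr(\Phi_{v,N}(k))I_{\du}$, so the constant loss multiplies the deviation rather than the norm. Second, your conversion from the squared norm to the norm, via $\abs{a-b}\le\abs{a^2-b^2}/b$, sends a deviation of $\snorm{V_kw}$ at level $2s$ to a deviation of $\snorm{V_kw}^2$ at level $\beta=2s$, and Hanson--Wright then gives an exponent $\min\{\beta^2,\beta\}$, which is \emph{linear} in $s$ once $s\gtrsim 1$; your claim that one "stays in the quadratic branch" holds only for small $s$, whereas the lemma asserts an $s^2$ exponent for all $s>0$. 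The fix is the sharper two-sided elementary inequality (Lemma~\ref{lem:elementary_inequality}): $\abs{z-1}\ge s$ implies $\abs{z^2-1}\ge\max\{s,s^2\}$, and since $\min\{\max\{s,s^2\},(\max\{s,s^2\})^2\}=s^2$ for every $s>0$, the quadratic exponent survives in both regimes. Both repairs are standard (they are the Vershynin-style argument the paper follows), but as written your Step~2 proves a strictly weaker statement.
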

Prior results show that $V^{(i)}_k$ is asymptotically circular Gaussian as $N$ goes to infinity~\cite[Ch. 16]{pintelon2012system}. We show that even under finite $N$, the tails of the distribution decay exponentially reflecting the properties of the Gaussian distribution.

 The proof of Lemma~\ref{lem:noise_concentration} is based on a novel extension of the Hanson-Wright inequality~\cite{hanson1971bound}, a standard tool for proving concentration of quadratic forms involving random variables.  Note that $V_k$ is a linear function of an infinite number of random noises $e^{(i)}_t$, $i=1,\dots,\du$, $-\infty<t\le N-1$. Hence, we need to extend the Hanson-Wright inequality~\cite{hanson1971bound,rudelson2013hanson} to semi-infinite matrices, that is, bounded operators from square summable (real-valued) sequences $\mathcal{L}_2$ to finite vector spaces.

\begin{theorem}[Semi-infinite Hanson-Wright]\label{thm:HW}
    Consider a sequence $z\triangleq\{z_t\}_{t=1}^{\infty},\,z_t\in\real^p$ of independent, zero-mean, $K^2$-sub-Gaussian random variables with covariance $\E z_tz^\top_t=\sigmae^2 I_p$. Let $\mathcal{A}:\mathcal{L}_2(p)\rightarrow \real^d$ be a linear map with bounded Frobenius norm $\hsnorm{\mathcal{A}}<\infty$. For any $\alpha>0$
\begin{multline}\label{eq:HW_inequality}
    \Pr\set{\abs{\snorm{\mathcal{A}(z)}^2-\E\snorm{\mathcal{A}(z)}^2}>\alpha\hsnorm{\mathcal{A}}^2}\le \\2 \exp\paren{-\min\Bigg\{\frac{\alpha^2\hsnorm{\mathcal{A}}^2}{144 K^4 \opnorm{\mathcal{A}}^2},\frac{\alpha\hsnorm{\mathcal{A}}^2}{16\sqrt{2} K^2 \opnorm{\mathcal{A}}^2} \Bigg\}},
\end{multline}
where the extension $\A(z)$ to $z$ is defined almost surely and $\snorm{\cdot}$ denotes the inner product norm.
\end{theorem}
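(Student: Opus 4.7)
The plan is to reduce to the classical finite-dimensional Hanson-Wright inequality by truncation, after first making sense of $\A(z)$ for the (almost surely non-summable) sequence $z$. Since the codomain $\real^d$ is finite dimensional, $\A$ has rank at most $d$ and admits a representation $\A(x)=\sum_{i=1}^{d}\langle a_i,x\rangle_{\mathcal{L}_2}e_i$ for some $a_i\in\mathcal{L}_2(p)$ and an orthonormal basis $\{e_i\}$ of $\real^d$. For each $i$ the formal series $\langle a_i,z\rangle=\sum_{t\ge 1}\langle a_{i,t},z_t\rangle$ is a sum of independent mean-zero random variables with total variance $\sigmae^2\sum_t\snorm{a_{i,t}}^2=\sigmae^2\snorm{a_i}^2<\infty$, so Kolmogorov's convergence theorem gives almost sure convergence. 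I define $\A(z)$ as this almost sure limit, equivalently as $\lim_n \A(z^{(n)})$ with $z^{(n)}\triangleq(z_1,\dots,z_n,0,0,\dots)\in\mathcal{L}_2(p)$.

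For each $n$ set $\A_n(z)\triangleq \A(z^{(n)})$. Then $\snorm{\A_n(z)}^2=\bar z_n^\top M_n \bar z_n$, where $\bar z_n\triangleq(z_1,\dots,z_n)\in\real^{np}$ and $M_n$ is the $np\times np$ positive semi-definite matrix obtained by restricting $\A^*\A$ to the first $np$ coordinates. From the singular value decomposition of $\A_n$ one reads off $\opnorm{M_n}=\opnorm{\A_n}^2\le \opnorm{\A}^2$, $\hsnorm{M_n}\le \opnorm{\A_n}\hsnorm{\A_n}\le \opnorm{\A}\hsnorm{\A}$, and $\tr(M_n)=\hsnorm{\A_n}^2$. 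Applying the classical Hanson-Wright inequality for sub-Gaussian quadratic forms to $\bar z_n$ with matrix $M_n$, and tracking constants from a standard MGF-based proof, yields for every $\alpha>0$
\[
\Pr\set{\abs{\snorm{\A_n(z)}^2-\sigmae^2\hsnorm{\A_n}^2}>\alpha\hsnorm{\A}^2}\le 2\exp\paren{-\min\set{\frac{\alpha^2\hsnorm{\A}^2}{144 K^4\opnorm{\A}^2},\frac{\alpha\hsnorm{\A}^2}{16\sqrt{2}K^2\opnorm{\A}^2}}}.
\]

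By construction $\A_n(z)\to\A(z)$ almost surely, so $\snorm{\A_n(z)}^2\to\snorm{\A(z)}^2$ a.s.; moreover $\hsnorm{\A_n}^2\uparrow\hsnorm{\A}^2$ by monotone convergence, so $\E\snorm{\A_n(z)}^2=\sigmae^2\hsnorm{\A_n}^2\to\sigmae^2\hsnorm{\A}^2=\E\snorm{\A(z)}^2$. For any $\epsilon>0$ and $n$ sufficiently large, the event $\set{\abs{\snorm{\A(z)}^2-\E\snorm{\A(z)}^2}>(\alpha+\epsilon)\hsnorm{\A}^2}$ is contained, modulo a null set, in $\set{\abs{\snorm{\A_n(z)}^2-\E\snorm{\A_n(z)}^2}>\alpha\hsnorm{\A}^2}$, so Fatou's lemma combined with the previous display yields the claimed bound at $\alpha+\epsilon$; letting $\epsilon\downarrow 0$ concludes. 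The main delicate point is in the previous step: verifying via the SVD of $\A_n$ that both $\opnorm{M_n}$ and $\hsnorm{M_n}$ are controlled uniformly in $n$ by operator/Frobenius norms of $\A$ (rather than by quantities that could blow up with the truncation), and aligning the universal constants with those in a reference proof of finite-dimensional Hanson-Wright to recover $1/144$ and $1/(16\sqrt{2})$.
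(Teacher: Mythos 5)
Your proposal follows essentially the same route as the paper's proof: make sense of $\A(z)$ via Kolmogorov's convergence theorem (the paper's Lemma on well-posedness does exactly this), truncate $\A$ to a finite-dimensional quadratic form whose matrix is controlled uniformly in $n$ by $\opnorm{\A}$ and $\hsnorm{\A}$ (the paper likewise uses $\hsnorm{(\hat{\A}^{(k)})^*\hat{\A}^{(k)}}\le\opnorm{\A}\hsnorm{\A}$ and $\opnorm{\hat{\A}^{(k)}}\le\opnorm{\A}$), invoke a finite-dimensional Hanson--Wright statement with the explicit constants $144$ and $16\sqrt{2}$, and pass to the limit with Fatou's lemma. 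The one structural difference is where the limit is taken: the paper applies Fatou to the moment generating function of the truncated quadratic forms and only afterwards runs the Chernoff argument, whereas you pass to the limit directly at the level of tail probabilities, which forces the $\epsilon$-slack in the deviation threshold. Both work; the MGF route is slightly cleaner because the bound on the truncated MGFs is uniform in $k$ and Fatou applies without any slack.

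Two points to tighten. First, the containment step is phrased imprecisely: it is not true that for a single fixed, sufficiently large $n$ the event $\set{\abs{\snorm{\A(z)}^2-\E\snorm{\A(z)}^2}>(\alpha+\epsilon)\hsnorm{\A}^2}$ is contained (mod null sets) in $\set{\abs{\snorm{\A_n(z)}^2-\E\snorm{\A_n(z)}^2}>\alpha\hsnorm{\A}^2}$, since how large $n$ must be depends on the sample point. The correct statement is that the former event is contained, up to a null set, in $\liminf_n$ of the latter events, after which Fatou's lemma for sets gives $\Pr(\liminf_n B_n)\le\liminf_n\Pr(B_n)$ and the rest of your argument (including the limit $\epsilon\downarrow 0$ by continuity from below) goes through. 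Since you do invoke Fatou, this is a fixable imprecision rather than a gap. Second, make sure the finite-dimensional Hanson--Wright you cite is a version valid for independent sub-Gaussian \emph{blocks} $z_t\in\real^p$ (as in the reference the paper relies on), not the classical version requiring independent scalar coordinates: the entries within each $z_t$ are only assumed uncorrelated, not independent.
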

The constants that appear in the statement are similar to~\cite{ziemann2023tutorial}; we extend their proof to semi-infinite matrices.
The Hanson-Wright inequality can be used to prove concentration of quadratic forms of the form $\xi^*V^*_kV_k\xi$. This, in turn, implies concentration of the norm $\opnorm{V_k}$ via the variational representation of the norm $\opnorm{V_k}^2=\sup_{\snorm{\xi}=1} \xi^*V^*_kV_k\xi$.

\section{Estimation over all frequencies}\label{sec:uniform_guarantees}
In the previous section, we derived finite-sample guarantees for estimating the frequency responses at fixed selected frequencies $2\pi\ell/M$,  $\ell\in[M]$. Here, we derive guarantees for estimating the function $G(e^{j\omega})$ uniformly over all $\omega\in [0,2\pi)$ in the $\mathcal{H}_{\infty}$ norm. We consider a naive estimator where to compute $\hat{G}(e^{j\omega})$ we use the closest frequency $\hat{G}(e^{j2\pi \ell/M})$, for some $\ell\in[M]$. 

Let $\chi(t)$ be the indicator function of the half-open interval $[-1/2,1/2)$.  Define the naive estimator
\begin{equation}\label{eq:naive_estimator}
\hat{G}^N(e^{j\omega})\triangleq\sum_{\ell=0}^{M-1}\hat{G}_{\ell N_p}\chi\left(\frac{\omega-\omega_{\ell N_p}}{2\pi/M}\right).
\end{equation}
Due to strict stability, it follows that the frequency response is smooth with Lipschitz constant upper bounded by $\snorm{G}_{\star}$. This follows from the fact that
\[
\norm{\frac{\partial G(e^{j\omega})}{\partial \omega}}_{\mathrm{op}}=\norm{\sum_{t=0}^{\infty} jt g_t e^{-j\omega t}}_{\mathrm{op}}\le \snorm{G}_\star.
\]
Hence, for any $\omega\in[0,2\pi)$, the error is bounded by
\[
\opnorm{G(e^{j\omega})-\hat{G}^N(e^{j\omega})}\le \frac{\pi}{M}\snorm{G}_{\star}+\opnorm{G(e^{j\omega_{\ell N_p}})-\hat{G}_{\ell N_p}},
\]
for some $\ell\in[M]$.
The first term scales with $1/M$, while the latter scales with $\sqrt{M/N}$, excluding logarithmic terms. Balancing the two terms, we obtain the following guarantees. 

\begin{theorem}[Guarantees in $\mathcal{H}_{\infty}$ norm]\label{thm:sup_norm_guarantees}
    Let $M=c_1N^{1/3}$, for some $c_1>0$ such that $M$, $N/M$ are integers. Consider the naive estimator~\eqref{eq:naive_estimator} and fix a failure probability $\delta$. Under Assumptions~\ref{assum:noise},~\ref{assum:input_bounded},~\ref{assum:excitation},~\ref{assum:strict_stability}, with probability at least $1-\delta$:
     \begin{align}
&\hinfnorm{G-\hat{G}^N}
\le \pi c^{-1}_1N^{-1/3}\snorm{G}_{\star}+c^{1/2}_1N^{-5/6}\frac{2\snorm{G}_\star D_u}{\underline{\sigma}_{u}}\nonumber\\& +c^{1/2}_1cN^{-1/3}\underline{\mathrm{SNR}}^{-1}_{N}\paren{\sqrt{\dy}+\frac{K^2}{\sigmae^2}\sqrt{\du+\log N/\delta}}\label{eq:sup_norm_guarantees}
\end{align}
where $c$ is a universal constant, \[\underline{\mathrm{SNR}}_N=\min_{\ell\in[M]}\mathrm{SNR}_{\ell N_p,N}\]
is the worst case SNR, and $\underline{\sigma}_{u}=\min_{\ell\in[M]}\sigma_{u,\ell}$ is the worst case excitation among the frequencies of interest.
\end{theorem}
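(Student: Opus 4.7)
The plan is to derive the $\mathcal{H}_\infty$ bound from the pointwise guarantee of Theorem~\ref{thm:finite_sample_ETFE} via a Lipschitz covering argument, and then to optimize $M$ to balance approximation and estimation error. The heavy analytic lifting (noise concentration, transient control, SNR bookkeeping) is already done in Theorem~\ref{thm:finite_sample_ETFE}; what remains is a standard bias--variance trade-off typical of non-parametric estimation over Lipschitz classes.

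First I would use the Lipschitz bound $\opnorm{\partial G(e^{j\omega})/\partial\omega}\le \snorm{G}_\star$ recorded in the excerpt. By construction of the indicator $\chi$ in~\eqref{eq:naive_estimator}, any $\omega\in[0,2\pi)$ lies within distance $\pi/M$ of some grid frequency $\omega_{\ell N_p}=2\pi\ell/M$, so the triangle inequality gives the pointwise decomposition
\[
\opnorm{G(e^{j\omega})-\hat{G}^N(e^{j\omega})}\le \frac{\pi}{M}\snorm{G}_\star + \opnorm{G(e^{j\omega_{\ell N_p}})-\hat{G}_{\ell N_p}},
\]
valid uniformly in $\omega$ (with $\ell$ depending on $\omega$). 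Taking the supremum over $\omega$ reduces the second term to the maximum over $\ell\in[M]$. Crucially, Theorem~\ref{thm:finite_sample_ETFE} already delivers that maximum \emph{simultaneously} for all $\ell\in[M]$ with failure probability $\delta$, so no further union bound is needed beyond what is baked into the log factor. Replacing the frequency-dependent $\sigma_{u,\ell}$ and $\mathrm{SNR}_{k,N}$ by their worst-case counterparts $\underline{\sigma}_u$ and $\underline{\mathrm{SNR}}_N$, and substituting $N_p=N/M$ into~\eqref{eq:HW_ETFE}, I obtain on the same event
\[
\hinfnorm{G-\hat{G}^N}\le \frac{\pi\snorm{G}_\star}{M} + \frac{2\snorm{G}_\star D_u\sqrt{M}}{\underline{\sigma}_u N} + \frac{\sqrt{M}}{\sqrt{N}}\underline{\mathrm{SNR}}_N^{-1}\Bigl(\sqrt{\dy}+c\frac{K^2}{\sigmae^2}\sqrt{\du+\log(M/\delta)}\Bigr).
\]

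Finally, I would substitute $M=c_1 N^{1/3}$: the three terms then scale as $N^{-1/3}$, $N^{-5/6}$, and $N^{-1/3}$ respectively, reproducing the three lines of~\eqref{eq:sup_norm_guarantees} after absorbing $\log(M/\delta)=\log(c_1 N^{1/3}/\delta)$ into $\log(N/\delta)$ up to universal constants. The choice of the exponent $1/3$ is forced by equating the approximation term $1/M$ with the dominant stochastic term $\sqrt{M/N}$, which is the classical optimal rate for learning Lipschitz functions. The main obstacle here is not analytic but combinatorial bookkeeping: one must verify that $c_1$ can be chosen so that both $M$ and $N/M$ are integers (a mild divisibility condition), keep track that $\underline{\mathrm{SNR}}_N$ and $\underline{\sigma}_u$ are minimized only over the $M$ excited frequencies rather than the full continuum, and confirm that no hidden $M$-dependence creeps into the universal constant $c$ through the log term. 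No new concentration tool is required beyond Theorem~\ref{thm:finite_sample_ETFE} itself.
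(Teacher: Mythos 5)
Your proposal is correct and follows exactly the route of the paper's own (one-line) proof: combine the Lipschitz/grid decomposition $\opnorm{G(e^{j\omega})-\hat{G}^N(e^{j\omega})}\le \frac{\pi}{M}\snorm{G}_{\star}+\opnorm{G(e^{j\omega_{\ell N_p}})-\hat{G}_{\ell N_p}}$ with the simultaneous-over-$\ell$ guarantee of Theorem~\ref{thm:finite_sample_ETFE}, take worst-case $\sigma_{u,\ell}$ and SNR, and substitute $M=c_1N^{1/3}$. Your bookkeeping of the three resulting rates $N^{-1/3}$, $N^{-5/6}$, $N^{-1/3}$ and of $\log(M/\delta)\le\log(N/\delta)$ matches the stated bound, so nothing is missing.
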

We can tune the constant $c_1$ to guarantee $\Np$ is an integer and trade between the Lipschitz constant and the SNR.
Excluding logarithmic factors, we obtain a rate of $\Ntot^{-1/3}$ which is the optimal one for non-parametric estimation of Lipschitz functions~\cite{tsybakov2008introduction,wainwright2019high}. The rate is suboptimal when higher order derivatives exist, which would imply stricter stability conditions than Assumption~\ref{assum:strict_stability}. For example, for twice differentiable functions, the rate can be improved to $\Ntot^{-2/5}$~\cite{Ljung1999system}. 
In the setting of rational functions, strict stability is equivalent to exponential stability, which, in turn, implies the existence of all high-order derivatives of $G$. In this case, it would be suboptimal to employ the naive estimator~\eqref{eq:naive_estimator} without additional smoothing. We leave this for future work.

The $\mathcal{H}_{\infty}$ norm bound picks up the frequency which is the hardest to learn as it depends on the worst-case SNR. Assume that the input excites uniformly all frequencies, that is, $\sigma_{u,\ell}=\sigma_u$, for all $\ell\in[M]$. Then, the worst case SNR scales inversely with $\sup_{\ell}\sqrt{\opnorm{\Phi_{v,N}(\ell N_p)}}$; as we increase the frequency resolution $M$, this quantity scales, in turn, with the $\mathcal{H}_{\infty}$ norm of the noise filter $H(\qq)$.

\begin{figure}[t]
 	\centering
 	\includegraphics[width=0.9\columnwidth]{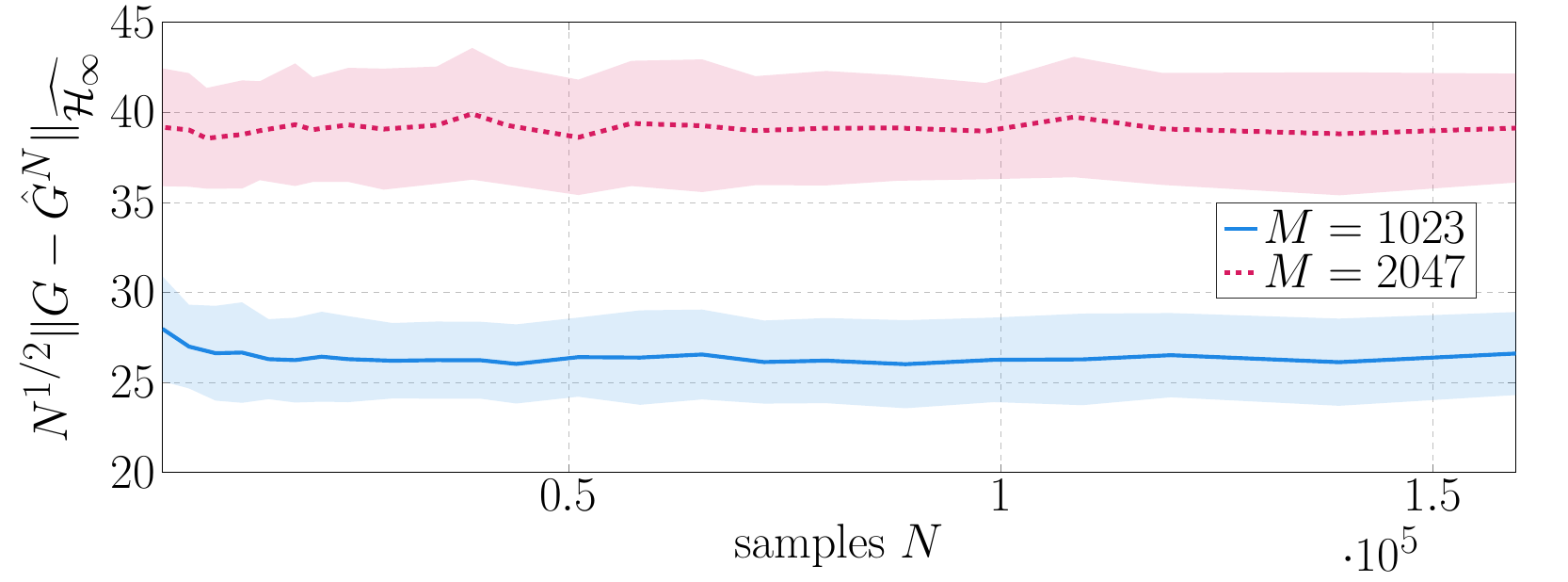}
 	\caption{The (normalized) empirical maximum error of the ETFE over the fixed frequency grid $2\pi \ell/M$, $\ell\in[M]$, for fixed $M$. The shaded areas show one (empirical) standard deviation. It decays with a rate of $N^{-1/2}$. Moreover, the error increases as we require more resolution, i.e., larger $M$. The error for $M=2047$ is roughly $\sqrt{2}$ times larger than for $M=1023$, verifying the result of Theorem~\ref{thm:finite_sample_ETFE}.}
 	\label{fig:discretized_points}
 \end{figure}

 \begin{figure}[t]
 	\centering
 	\includegraphics[width=0.9\columnwidth]{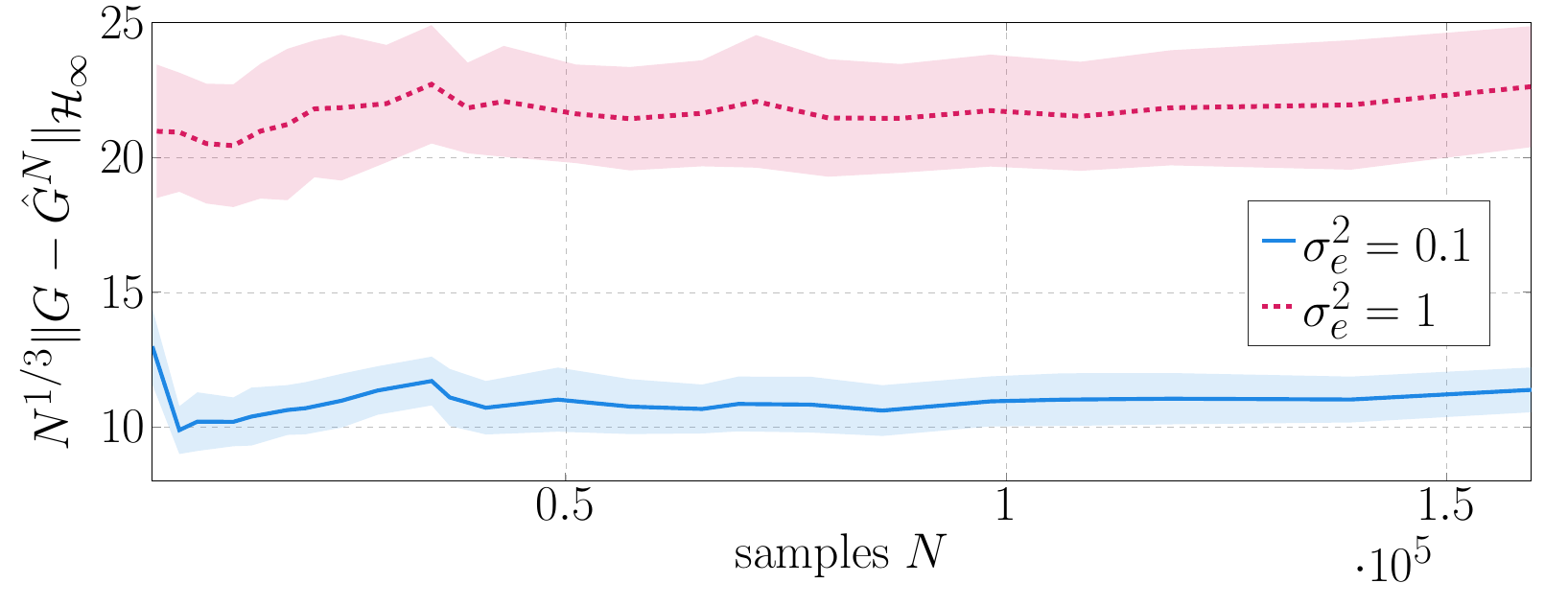}
 	\caption{The (normalized) empirical $\mathcal{H}_\infty$ norm of the ETFE, based on the naive estimator~\eqref{eq:naive_estimator}.  The shaded areas show one (empirical) standard deviation. We optimize the value of $M$ for every choice of $N$. Unlike the error at fixed frequencies, it decays slower, with a rate of $N^{-1/3}$.}
 	\label{fig:uniform_bounds}
 \end{figure}
\section{Simulations}\label{sec:simulation}
We study the performance of the ETFE and the naive estimator via a numerical example. Consider the system
\begin{equation*}
G(\qq)=\frac{0.12\qq^{-1}+0.18\qq^{-2}}{1-1.4\qq^{-1}+1.443\qq^{-2}-1.123\qq^{-3}+0.7729q^{-4}}
\end{equation*}
with noise filter $H(\qq)=(1-0.2\qq^{-1})^{-1}$. Let all past inputs be zero $u_t=0,t<0$. 
We generate the excitation signal based on PRBS~\cite{Ljung1999system} with an additional offset to excite the zero frequency. Note that PRBS  maximal length signals require $M=2^d-1$, for some $d\in\mathbb{N}$. 

In the first simulation, we study the maximum error of the ETFE over the fixed grid $\{2\pi \ell/M\}_{\ell\in[M]}$
\[
\snorm{G-\hat{G}^N}_{\widehat{\mathcal{H}_{\infty}}}\triangleq \sup_{\ell\in[M]}\opnorm{G(e^{j2\pi\frac{\ell}{M}})-\hat{G}_{\ell\Np}},
\]
which can be thought as the ``discretized" $\mathcal{H}_{\infty}$ norm of the error.  We keep $M$ fixed, $\sigmae^2=0.1$, and we vary the total number of samples $\Ntot=N$. To visualize the results, we perform $100$ Monte Carlo iterations for every number of samples $N$ and we present the empirical mean along with one empirical standard deviation. As shown in Fig.~\ref{fig:discretized_points}, the error decays with a rate of $N^{-1/2}$ validating Theorem~\ref{thm:finite_sample_ETFE}.

In the second simulation, we study the maximum error of the naive estimator~\eqref{eq:naive_estimator} over all frequencies, that is, the ``true" $\mathcal{H}_{\infty}$ norm of the error. For every number of samples $N$, we tune $d$ in $M=2^d-1$ to be the optimal one (empirically), and we perform $100$ Monte Carlo simulations. As shown in Fig.~\ref{fig:uniform_bounds}, the error decays with a rate of $N^{-1/3}$ reflecting the result of Theorem~\ref{thm:sup_norm_guarantees}.

\section{Conclusion and Future Work}
We provide finite-sample guarantees for the ETFE over a selected frequency grid, in the case of open-loop periodic excitation and under strict stability assumptions. By tuning the frequency resolution and exploiting Lipschitz continuity, we also obtain estimation guarantees in the $\mathcal{H}_{\infty}$ norm. 
An interesting direction for future work is studying finite-sample non-parametric least squares~\cite{wainwright2019high,ziemann2022single} in the frequency domain. This approach could lead to interesting connections between function class complexity and experiment design. Moreover, adding more structure, beyond Lipschitz continuity, will lead to faster rates. Other topics that are left for future work include the estimation of the noise statistics and extending the guarantees to different variations of the excitation method and the ETFE~\cite{falconi2023mean}. Finally, extending minimax lower bounds to this setting is also open~\cite{tu2019minimax}.
\section*{Acknowledgment}
This work has been supported by the Swiss National Science Foundation under NCCR Automation (grant agreement 51NF40 180545), and by the  European Research Council under the ERC Advanced grant agreement  787845 (OCAL).
\bibliographystyle{IEEEtran}
\bibliography{literature.bib}
\appendix

\section{Proof of Theorem~\ref{thm:HW}}
Recall that $\mathcal{L}_2(p)$ represents the Hilbert space of $p$-dimensional real-valued square summable sequences. For any sequences $\alpha,\beta\in\mathcal{L}_2(p)$, their inner product is defined as
\[
\langle \alpha,\beta\rangle\triangleq \sum_{k=1}^{\infty}\alpha^\top_k\beta_k,
\]
with the respective inner product norm
\[
\snorm{\alpha}=\sqrt{\langle \alpha,\alpha\rangle}.
\]
Let $\A:\mathcal{L}_2(p)\rightarrow \real^d$ be a linear map from sequences to finite-dimensional vectors. 
Consider the standard orthonormal basis for $\mathcal{L}_2(p)$, that is, the set $\{b_i\}_{i=1}^{\infty}$, where $b_i=(0,0,\dots,1,0,\dots)$ is the unit norm sequence with all elements $0$ except for the $i$-th one. Then, we can represent $\A$ with a semi-infinite block matrix
\begin{equation}\label{eq:semi_infinite_matrix_representation}
\begin{bmatrix}
\A_{11}&\A_{12}&\cdots\\
\A_{21}&\A_{22}&\cdots\\
\vdots\\
\A_{d1}&\A_{d2}&\cdots
\end{bmatrix},
\end{equation}
where $\A_{ij}\in\real^{1\times p}$, $i=1,\dots,d$, $j\ge 1$.
Then, the Frobenius norm is given by
\[
\hsnorm{\A}^2\triangleq \sum_{i=1}^d\sum_{j=1}^{\infty}\hsnorm{\A_{ij}}^2.
\]
In the following, we will study operators with bounded Frobenius norm $\hsnorm{\A}<\infty$; the operator norm is upper bounded by the Frobenius norm, and, thus, also bounded.

Consider now a sequence $z\triangleq \{z_t\}_{t=1}^{\infty}$ of independent zero mean $K-$sub-Gaussian random variables, with $z_t\in\real^p$. 
We will study quadratic forms $\snorm{\A(z)}^2$. Since $z$ is not square summable, $\A(z)$ is interpreted in terms of the extension
\begin{equation}\label{eq:extension_of_A}
[\A(z)]_i=\lim_{k\rightarrow \infty}\sum_{j=1}^k{\A_{ij}z_j},\,i=1,\dots,d
\end{equation}
when the above limit exists.  
We note that in our setting $\A(z)$ is well-defined.
\begin{lemma}[Well-posedness]\label{lem:well_posed}
Let $\A:\mathcal{L}_2(p)\rightarrow \real^d$ be a linear map with bounded Frobenius norm. Let $z=\{z_t\}_{t=1}^{\infty}$ be zero-mean, independent, and $K-$sub-Gaussian. Then, the extension $\A(z)\in\real^{d}$ is well-defined almost surely.
\end{lemma}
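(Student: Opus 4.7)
The plan is to show the limit in~\eqref{eq:extension_of_A} exists almost surely for each of the finitely many coordinates $i=1,\dots,d$, and then take the intersection over $i$ to conclude that $\A(z)$ is well-defined a.s.

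First I would fix a coordinate $i$ and study the scalar partial sums $S_{i,k}\triangleq\sum_{j=1}^{k}\A_{ij}z_j$. Since the $z_j$ are independent and zero-mean, so are the summands $\A_{ij}z_j$, each of which is a scalar random variable. Using the $K^2$-sub-Gaussianity of $z_j$ in~\eqref{eq:subgdef}, applied with the test vector $\xi=\A_{ij}^\top\in\real^p$, I would extract the second moment bound $\E(\A_{ij}z_j)^2\le K^2\hsnorm{\A_{ij}}^2$ from the sub-Gaussian moment generating function (either by expanding the exponential or directly from the standard fact that $K^2$-sub-Gaussianity implies variance proxy $K^2$).

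Next I would sum these variances across $j$ and invoke the Frobenius norm assumption:
\[
\sum_{j=1}^{\infty}\E(\A_{ij}z_j)^2\le K^2\sum_{j=1}^{\infty}\hsnorm{\A_{ij}}^2\le K^2\hsnorm{\A}^2<\infty.
\]
Since $S_{i,k}$ is a sum of independent, zero-mean random variables whose variances form a convergent series, Kolmogorov's one-series theorem (equivalently, the $L^2$-martingale convergence theorem applied to the partial sum martingale, which is also a.s.\ convergent by Doob's inequality) guarantees that $S_{i,k}$ converges almost surely as $k\to\infty$ to a finite limit. Denote the corresponding probability-one event by $E_i$.

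Finally, since $d$ is finite, the intersection $\bigcap_{i=1}^{d}E_i$ still has probability one, so on this event all $d$ coordinate limits~\eqref{eq:extension_of_A} exist simultaneously and $\A(z)\in\real^d$ is well-defined. The main (and essentially only) subtle point is to make sure the sub-Gaussian assumption translates cleanly into a summable variance series; once that is in hand, the remainder is a standard invocation of a classical convergence theorem for independent sums, so I do not expect a substantive obstacle here.
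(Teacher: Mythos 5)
Your proposal is correct and follows essentially the same route as the paper's proof: fix a coordinate $i$, use sub-Gaussianity to bound the variances of the independent zero-mean summands $\A_{ij}z_j$, sum them via the Frobenius norm assumption, and invoke Kolmogorov's convergence theorem for random series before intersecting over the finitely many coordinates. Your variance bound $\E(\A_{ij}z_j)^2\le K^2\hsnorm{\A_{ij}}^2$ obtained by testing the MGF with $\xi=\A_{ij}^\top$ is in fact slightly cleaner than the paper's $cpK^2\hsnorm{\A_{ij}}^2$, but this makes no substantive difference.
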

\begin{proof}
    Since $z_t$, $t\ge 1$ are $K^2-$sub-Gaussian, we have
    \[
    \sup_{t\ge 1}\E z^\top_tz_t=c p K^2<\infty,
    \]
    for some universal constant $c$. This follows from the fact that every coordinate $[z_{t}]_s$, $s=1,\dots,p$ is also $K^2-$sub-gaussian and the bound $\E ([z_{t}]_s)^2\le c K^2$ (see~\cite[Prop. 2.5.2]{vershynin2018high}). 
    Fix an $i$, $i=1,\dots,d$. We have
    \begin{equation*}
\sum_{j=1}^{k}\mathrm{var}(\A_{ij}z_j)
=\sum_{j=1}^{k}\E z_j^\top \A^\top_{ij}\A_{ij}z_j
=\sum_{j=1}^{k}\snorm{\A_{ij}}^2\E\snorm{z_j}^2\le \sum_{j=1}^{k}\snorm{\A_{ij}}^2 cp K^2
    \end{equation*}
    Hence, we have
    \[
\sum_{j=1}^{\infty}\mathrm{var}(\A_{ij}z_j)\le \hsnorm{\A}^2 cpK^2<\infty.
    \]
   Hence, by the Kolmogorov maximal inequality~\cite[Th. 2.5.3]{durrett2010probability}, the series
$
\lim_{k\rightarrow \infty}\sum_{j=1}^k{\A_{ij}z_j}
$
converges almost surely for any $i=1,\dots,d$ and, thus, $\A(z)$ is almost surely well-defined.
\end{proof}
The main idea behind proving Theorem~\ref{thm:HW} is using the Chernoff bound method. Let $X=\snorm{\A(z)}^2-\E\snorm{\A(z)}^2$. Then, for any $\lambda\ge0$
\[
\Pr(X\ge t)\le \min_{\lambda\ge 0}e^{-\lambda t}\E e^{\lambda X}.
\]
To control the right-hand side, we have to upper bound the moment-generating function $\E e^{\lambda X}$, which requires most of the work.

\begin{lemma}[MGF bound]\label{lem:mgf_HW}
Let $z=\{z_t\}_{t=1}^{\infty}$ be a sequence of independent, zero-mean, $K^2$-sub-Gaussian random variables. Let $\A:\mathcal{L}_2(p)\rightarrow \real^d$ be a linear map with bounded Frobenius norm $\hsnorm{\A}<\infty$. For every $\lambda \in\real$ such that $\abs{\lambda}\le (8\sqrt{2}K^2 \opnorm{\A}^2)^{-1}$, we have
\[
\E \exp(\lambda \snorm{\A(z)}^2-\lambda\E\snorm{\A(z)}^2)\le 36 \lambda^2 K^4 \opnorm{\A}^2\hsnorm{\A}^2,
\]
where the extension $\A(z)$ is interpreted as in~\eqref{eq:extension_of_A}.
\end{lemma}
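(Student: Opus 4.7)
The plan is to reduce the claim to the classical finite-dimensional Hanson--Wright MGF bound by a truncate-and-pass-to-limit argument. At each truncation level the operator is a bona fide finite-rank matrix, so the standard proof via decoupling applies with the stated constants; the only genuinely new work is to verify that nothing is lost in the limit and that the extension~\eqref{eq:extension_of_A} behaves well.

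\textbf{Truncation.} For each integer $k\ge 1$, let $\A^{(k)}:\mathcal{L}_2(p)\to\real^d$ be obtained from the semi-infinite representation~\eqref{eq:semi_infinite_matrix_representation} by zeroing out every block column beyond the $k$-th. Then $\A^{(k)}(z)=\sum_{j=1}^{k}\A_{\cdot j}z_j$ is an honest measurable function of $(z_1,\dots,z_k)$. By construction $\opnorm{\A^{(k)}}\le\opnorm{\A}$ and $\hsnorm{\A^{(k)}}^2\uparrow\hsnorm{\A}^2$. Moreover the Kolmogorov maximal-inequality argument of Lemma~\ref{lem:well_posed} yields $\A^{(k)}(z)\to\A(z)$ almost surely, and hence $\snorm{\A^{(k)}(z)}^2\to\snorm{\A(z)}^2$ almost surely.

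\textbf{Finite-dimensional MGF bound.} For each $k$, $\A^{(k)}(z)$ is a linear image of the finite-dimensional $K^2$-sub-Gaussian vector $(z_1,\dots,z_k)$, so the standard proof of the Hanson--Wright MGF bound applies: decouple the off-diagonal part of the quadratic form $z^\top (\A^{(k)})^{*}\A^{(k)} z$ via an independent copy of $z$, handle the diagonal part through the sub-exponential MGF of $z_j^\top z_j-\sigmae^2 p$, and track constants as in~\cite{ziemann2023tutorial}. This gives, for all $\abs{\lambda}\le(8\sqrt{2}K^2\opnorm{\A^{(k)}}^2)^{-1}$,
\[
\E\exp\!\paren{\lambda\snorm{\A^{(k)}(z)}^2-\lambda\E\snorm{\A^{(k)}(z)}^2}\le 36\lambda^2 K^4\opnorm{\A^{(k)}}^2\hsnorm{\A^{(k)}}^2.
\]
Since $\opnorm{\A^{(k)}}\le\opnorm{\A}$ and $\hsnorm{\A^{(k)}}\le\hsnorm{\A}$, the admissible range for $\lambda$ contains the target range $\abs{\lambda}\le(8\sqrt{2}K^2\opnorm{\A}^2)^{-1}$, and on that range the right-hand side is majorized by $36\lambda^2 K^4\opnorm{\A}^2\hsnorm{\A}^2$, uniformly in $k$.

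\textbf{Passing to the limit.} The centering is easy: $\E\snorm{\A^{(k)}(z)}^2=\sigmae^2\hsnorm{\A^{(k)}}^2\to\sigmae^2\hsnorm{\A}^2=\E\snorm{\A(z)}^2$ by monotone convergence. For the MGF itself I would split on the sign of $\lambda$. If $\lambda\le 0$ the integrands are bounded by $1$, so dominated convergence upgrades the a.s.\ limit to an $L^1$ limit. If $\lambda>0$, Fatou's lemma applied to $\exp(\lambda\snorm{\A^{(k)}(z)}^2)$ gives
\[
\E\exp(\lambda\snorm{\A(z)}^2)\le \liminf_{k\to\infty}\E\exp(\lambda\snorm{\A^{(k)}(z)}^2),
\]
and combining with the uniform bound of the previous step delivers the desired inequality for $\A(z)$.

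\textbf{Main obstacle.} The principal effort lies in the finite-dimensional step: pushing the decoupling/symmetrization argument for quadratic forms in $K^2$-sub-Gaussian vectors through while tracking the explicit constants $36$ and $8\sqrt{2}$. The limit step is routine, but one must remember that $\A(z)$ is defined only almost surely via~\eqref{eq:extension_of_A}, so the a.s.\ convergence provided by Lemma~\ref{lem:well_posed}---rather than any pointwise statement---is what feeds into Fatou/DCT; this is why the well-posedness lemma is a genuine prerequisite rather than a cosmetic one.
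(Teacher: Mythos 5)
Your proposal is correct and follows essentially the same route as the paper: truncate to finite block columns, invoke the finite-dimensional Hanson--Wright MGF bound of Ziemann et al.\ with the norm monotonicity $\opnorm{\A^{(k)}}\le\opnorm{\A}$, $\hsnorm{\A^{(k)}}\le\hsnorm{\A}$, establish a.s.\ convergence of the truncated quadratic forms via the well-posedness lemma, and conclude with Fatou. The only cosmetic differences are that you compute $\E\snorm{\A^{(k)}(z)}^2=\sigmae^2\hsnorm{\A^{(k)}}^2$ directly (the paper instead proves convergence of the centerings by Cauchy--Schwarz) and that you split on the sign of $\lambda$, whereas the paper applies Fatou to the positive variables $X_k$ in both cases.
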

\begin{proof}
We will consider finite-dimensional truncations of the quadratic form $\snorm{\A(z)}^2$, apply the result of~\cite{ziemann2023tutorial}, and take limits.
 Let $\hat{\A}^{(k)}:\mathcal{L}_2(p)\rightarrow \real^d$ be the truncated map with semi-infinite matrix representation
 \begin{equation}
 \hat{\A}^{(k)}_{ij}=\left.\begin{aligned}
&\A_{ij},\,\text{for }i=1,\dots,d,\,j\le k\\
&0\text{ elsewhere}
 \end{aligned}\right\}.
 \end{equation}
Define the random variables
\begin{align*}
X_k&\triangleq \exp(\lambda \snorm{\hat{\A}^{(k)}(z)}^2-\lambda\E\snorm{\hat{\A}^{(k)}(z)}^2),\,k\ge 1\\
X&\triangleq \exp(\lambda \snorm{\A(z)}^2-\lambda\E\snorm{\A(z)}^2).
\end{align*}
\textbf{Step a): truncation.} First, we will establish the desired inequality for all $X_k$. Notice that $\snorm{\hat{\A}^{(k)}(z)}^2$ is a finite dimensional quadratic form. Hence, Proposition A.1 of~\cite{ziemann2023tutorial} applies and provides the bound
\[
\E X_k\le 36 \lambda^2 K^4 \hsnorm{(\hat{\A}^{(k)})^* \hat{\A}^{(k)}}^2,
\]
for $\abs{\lambda}\le (8\sqrt{2}K^2 \opnorm{(\hat{\A}^{(k)})^* \hat{\A}^{(k)}})$ where $(\hat{\A}^{(k)})^*$ denotes the adjoint. Truncation does not increase the norm, i.e., $\hsnorm{\A}\ge \hsnorm{\hat{\A}^{(k)}}$ and $\opnorm{\A}\ge \opnorm{\hat{\A}^{(k)}}$. Moreover
\[
\hsnorm{(\hat{\A}^{(k)})^* \hat{\A}^{(k)}}^2\le \opnorm{\hat{\A}^{(k)}}^2\hsnorm{\hat{\A}^{(k)}}^2.
\]
Hence, we can write
\begin{equation}\label{eq:HW_truncated}
\E X_k\le 36 \lambda^2 K^4 \opnorm{\A}^2\hsnorm{\A}^2,
\end{equation}
for $\abs{\lambda}\le (8\sqrt{2}K^2 \opnorm{\A}^2)$.

\textbf{Step b): convergence.} We prove that $X_k$ converges almost surely to $X$. Convergence of $\hat{\A}^{(k)}(z)$ to $\A(z)$ follows from Lemma~\ref{lem:well_posed}. To show convergence of  $\E\snorm{\hat{\A}^{(k)}(z)}^2$ observe that
\begin{equation*}
|\snorm{\A(z)}^2-\snorm{\hat{\A}^{(k)}(z)}^2|\le 2\abs{\paren{\hat{\A}^{(k)}(z)}^\top(\A-\hat{\A}^{k})(z)}+\snorm{(\A-\hat{\A}^{k})(z)}^2.
\end{equation*}
Let $C_z=\sup_{t\ge 1}\E z^\top_t z_t$; due to sub-Gaussianity $C_z=cpK^2$ for some universal constant $c$--see proof of  Lemma~\ref{lem:well_posed}. Invoking the Cauchy-Schwarz inequality
\begin{equation*}
\E|\snorm{\A(z)}^2-\snorm{\hat{\A}^{(k)}(z)}^2|\le 2C_z \hsnorm{\A}\hsnorm{\A-\hat{A}^{(k)}}+C_z\hsnorm{\A-\hat{A}^{(k)}}^2\rightarrow 0,
\end{equation*}
as $k\rightarrow \infty$. Convergence of $X_k$ to $X$ follows by continuity.

\textbf{Step c): limit.} Since $X_k>0$, by Fatou's lemma
\begin{equation}\label{eq:Fatous}
\E X\le \liminf \E X_k.
\end{equation}
The result follows from~\eqref{eq:HW_truncated},~\eqref{eq:Fatous}.
\end{proof}

The proof of Theorem~\ref{thm:HW} now follows from Lemma~\ref{lem:mgf_HW} and the Chernoff bound method. The remaining details are omitted since they are identical to~\cite[proof of Th. 2.1 and Th. B.1]{ziemann2023tutorial}. $\qed$

\section{Proof of Lemma~\ref{lem:noise_concentration}}
\begin{proof}
First, we will establish a high probability bound on $V_kV^*_K$. Then we we use that to obtain a high probability bound on the operator norm $\opnorm{V_k}$.

\textbf{Step 1): Bound on $V_kV^*_k$.} Due to the independence between the experiments
\begin{equation}\label{eq:covariance_of_Vk}
\E V^*_kV_k=\diag(\E (V^{(i)})^*_kV^{(i)}_k)=\tr (\Phi_{v,N}(k)) I_{\du}.
\end{equation}
By the variational representation of the operator norm for symmetric matrices
\begin{align}
&\opnorm{V^*_kV_k-\tr (\Phi_{v,N}(k)) I_{\du}}\nonumber\\
&=\sup_{\xi \in \cplx^{\du},\snorm{\xi}=1}\abs{\xi^*(V^*_kV_k-\tr (\Phi_{v,N}(k)) I_{\du})\xi}\label{eq:variational_representation_symmetric_matrix}\\
&=\sup_{\xi \in \cplx^{\du},\snorm{\xi}=1}\abs{\snorm{V_k\xi}^2-\E\snorm{V_k\xi}^2}\nonumber.
\end{align}
We need to control the quantity inside the supremum over the whole unit sphere. First, we will bound it for a fixed $\xi$ using the Hanson-Wright inequality. Then, we will discretize the unit sphere and use a covering argument.

\textbf{Step 1a): fixed $\xi$.} Observe that $V^{(i)}_k=\mathcal{F}_k((h*e^{(i)}))$ is the output of a linear map applied to the infinite sequence $e^{(i)}\triangleq \{e^{(i)}_{N-1},e^{(i)}_{N-2},\dots\}$, where $*$ denotes convolution. We can rewrite the above linear relation using the notation $V^{(i)}_k=H_k (e^{(i)})$, where $H_k$ represents the map $\mathcal{F}_k\circ((h*\cdot))$.
For convenience, let us group all noise sequences $e^{(i)}$ into one single sequence $e$. We use the enumeration
\begin{align*}
e&\triangleq \set{e_t}_{t=1}^{\infty}\triangleq \set{e^{(1)}_{N-1},\dots,e^{(\du)}_{N-1},e^{(1)}_{N-2},\dots,e^{(\du)}_{N-2},\dots}.
\end{align*}
Let $\xi\in\cplx^{\du}$ be any vector with unit norm $\snorm{\xi}=1$.  
Then,
\[
V_k\xi=H_k (\sum_{i=1}^{\du} e^{(i)}\xi_i)=(H_k\circ \Xi)(e),
\]
where $\Xi$ is the linear map from $e$ to $\sum_{i=1}^{\du}e^{(i)}\xi_i$. We can represent $\Xi$ with the infinite matrix
\[
\Xi=\begin{bmatrix} \xi^* \otimes I_{\de}&0&0&\cdots\\0&\xi^* \otimes I_{\de}&0&\cdots\\0&0&\xi^* \otimes I_{\de}\\\vdots&&&\ddots
\end{bmatrix},
\]
where $\otimes$ denotes the Kronecker product.
Since $\xi^*\xi=1$, $\Xi\circ\Xi^*$ is the identity operator and has unit norm. Note that $(H_k\circ \Xi)(e)\in\cplx^{\dy}$ is complex-valued, while the Hanson-Wright inequality applies to real valued maps. To resolve this issue, we embed $V_k\xi$ to a real vector space. Let $\Re (H_k\circ \Xi)$ denote the real part of the operator $H_k\circ \Xi$ with $\Im (H_k\circ \Xi)$ the respective imaginary part. Define the lifted operator
\[
\mathcal{A}(e)\triangleq \begin{bmatrix}
    \Re (H_k\circ \Xi)(e)\\
    \Im (H_k\circ \Xi)(e)
\end{bmatrix}\in\real^{2\dy}.
\]
Our goal is apply the Hanson-Wright inequality to $\mathcal{A}(e)$.
Observe that we have
\begin{equation}\label{eq:lifted_space_norm}
\snorm{\mathcal{A}(e)}^2=\snorm{ \Re (H_k\circ \Xi)(e)}^2+\snorm{ \Im (H_k\circ \Xi)(e)}^2=\xi^* V^*_kV_k\xi=\snorm{V_k\xi}^2.
\end{equation}
Using the representation of $\mathcal{A}$ as a semi-infinite matrix as in~\eqref{eq:semi_infinite_matrix_representation} and since the noise sequence $e=\{e_t\}_{t=1}^{\infty}$ is i.i.d. with covariance $\sigmae^2 I_{\de}$, we have
 \[
\E \snorm{\mathcal{A}(e)}^2=\E \sum_{i=1}^{2\dy}(\sum_{j=1}^{\infty}\mathcal{A}_{ij}e_j)^2=\sigmae^2\sum_{i=1}^{2\dy}\sum_{j=1}^{\infty}\hsnorm{\mathcal{A}_{ij}}^2 =\sigmae^2\hsnorm{\mathcal{A}}^2.
 \]
Meanwhile, we have
\[\E \xi^\top V^*_kV_k\xi\stackrel{\eqref{eq:covariance_of_Vk}}{=}\xi^\top\tr(\Phi_{v,N}(k))I_{\du}\xi=\tr(\Phi_{v,N}(k))\] 
Combining the above equalities results in
 \begin{equation}\label{eq:forbenius_mathcal_A}
\E \snorm{\mathcal{A}(e)}^2=\tr(\Phi_{v,N}(k)),\,\hsnorm{\mathcal{A}}^2=\tr(\Phi_{v,N}(k))\sigmae^{-2}.
 \end{equation}
Next, we upper-bound the operator norm of $\mathcal{A}$. Since the sequence $e$ is real valued, we have
 \[
\opnorm{\mathcal{A}}= \opnorm{H_k\circ \Xi}\le \opnorm{H_k},
 \]
 where the inequality follows from the fact that $\Xi$ has unit norm. Finally, since $V^{(i)}_k=H_ke^{(i)}$, for any $i=1,\dots,\du$, we have
 \[
\Phi_{v,N}(k)=\E V^{(i)}_k(V^{(i)})^*_k=\sigmae^2 H_kH^*_k.
 \]
 Combing the above we obtain
  \begin{equation}\label{eq:opnorm_mathcal_A}
\opnorm{\mathcal{A}}\le \sqrt{\opnorm{\Phi_{v,N}(k)
}}/\sigmae.
 \end{equation}

 With~\eqref{eq:lifted_space_norm},~\eqref{eq:forbenius_mathcal_A},~\eqref{eq:opnorm_mathcal_A} at hand we can finally apply the Hanson-Wright inequality  to $\mathcal{A}(e)$. Invoking Theorem~\ref{thm:HW} with $\alpha=\beta\sigmae$, we obtain
\begin{multline}
    \Pr\set{\big|\snorm{V_k\xi}^2-\tr(\Phi_{v,N}(k))\big|>\beta \: \tr(\Phi_{v,N}(k))}\le 2    \exp\paren{-144^{-1}\min\{\beta^2,\beta\}\frac{\sigmae^4 }{K^4}\frac{\tr(\Phi_{v,N}(k))}{\opnorm{\Phi_{v,N}(k)}}},
\end{multline}
Note that we simplified the expression by using $1/144\le 1/(16\sqrt{2})$ and by exploiting the fact that the sub-Gaussian parameter upper-bounds the variance, that is, $K\ge \sigmae$.

\textbf{Step 1b): covering argument.} To control the supremum in~\eqref{eq:variational_representation_symmetric_matrix}, we need to discretize the unit sphere. In particular, we consider an $\epsilon-$net and apply the following result.
\begin{lemma}[Operator Norm on a net~\cite{vershynin2018high,ziemann2023tutorial}]\label{lem:operator_norm_covering}
    Let $A\in\cplx^{d\times d}$ be a hermitian random matrix and let $\epsilon\in(0,1/2)$. Let $\mathcal{N}$ be an $\epsilon-$net of the complex unit sphere with minimal cardinality. Then, for any $t>0$
    \[\Pr(\opnorm{A}\ge t)\le \abs{\mathcal{N}} \max_{z\in\mathcal{N}}\Pr(\abs{z^*Az}\ge (1-2\epsilon)t),\]
    where the cardinality $\abs{\mathcal{N}}$ is upper bounded
    \[
     \abs{\mathcal{N}}\le \paren{1+\frac{2}{\epsilon}}^{2d}.
    \]
\end{lemma}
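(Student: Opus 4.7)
The plan is to combine two standard covering-argument ingredients: a volumetric bound on the cardinality of an $\epsilon$-net of the complex unit sphere, and a deterministic perturbation inequality for Hermitian quadratic forms, followed by a union bound.

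For the cardinality, I would identify $\cplx^d$ with $\real^{2d}$ via $z\mapsto(\Re z,\Im z)$. This identification is an isometry of the Euclidean norm, so the complex unit sphere corresponds to the real unit sphere $\mathbb{S}^{2d-1}\subset\real^{2d}$. A standard volumetric argument (as in \cite{vershynin2018high}) shows that a minimal $\epsilon$-net of the unit sphere in $\real^n$ has cardinality at most $(1+2/\epsilon)^n$; taking $n=2d$ yields the claimed $\abs{\mathcal{N}}\le(1+2/\epsilon)^{2d}$.

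For the main inequality, I would exploit the variational representation $\opnorm{A}=\sup_{\snorm{z}=1}\abs{z^*Az}$, which holds for any Hermitian $A$. Fix a unit vector $z$ and pick $z_0\in\mathcal{N}$ with $\snorm{z-z_0}\le\epsilon$. Using the identity
\[
z^*Az-z_0^*Az_0=(z-z_0)^*Az+z_0^*A(z-z_0),
\]
and bounding each term by $\opnorm{A}\snorm{z-z_0}\le\epsilon\opnorm{A}$ via Cauchy--Schwarz, I get $\abs{z^*Az-z_0^*Az_0}\le 2\epsilon\opnorm{A}$. Taking the supremum over unit $z$ yields $\opnorm{A}\le\max_{z_0\in\mathcal{N}}\abs{z_0^*Az_0}+2\epsilon\opnorm{A}$, and rearranging (using $\epsilon<1/2$) gives the deterministic bound $(1-2\epsilon)\opnorm{A}\le\max_{z_0\in\mathcal{N}}\abs{z_0^*Az_0}$.

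With this deterministic inequality in hand, the event $\{\opnorm{A}\ge t\}$ is contained in the finite union $\bigcup_{z_0\in\mathcal{N}}\{\abs{z_0^*Az_0}\ge(1-2\epsilon)t\}$. A union bound over this collection immediately delivers the stated probability inequality. There is no serious obstacle here: the only subtleties are tracking the complex-versus-real sphere identification in the volumetric step and symmetrically handling both the conjugate-linear and linear pieces in the perturbation expansion. The argument is essentially the same as the real-symmetric version found in standard references, extended to the complex Hermitian case.
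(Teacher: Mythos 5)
Your proposal is correct and is exactly the standard net argument (volumetric cardinality bound on $S^{2d-1}$ via the $\cplx^d\cong\real^{2d}$ identification, the Hermitian perturbation identity, and a union bound) that the paper itself invokes by citing Lemma 2.5 of the referenced tutorial together with the same complex-to-real isomorphism remark. No gaps; the two arguments coincide.
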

Lemma~\ref{lem:operator_norm_covering} follows directly by Lemma 2.5 in~\cite{ziemann2023tutorial}. The exponent scales with $2d$ instead of $d$ since the complex unit sphere in $\cplx^d$ is isomorphic to the unit sphere $S^{2d-1}$ in $\real^{2d}$.

Applying Lemma~\ref{lem:operator_norm_covering} with $\epsilon=1/4$, $t$ replaced with $2\beta$, and $A$ replaced with $V^*_kV_k-\tr(\Phi_{v,N}(k))I_{\du}$, we obtain
    \begin{multline}\label{eq:noise_concentration_squared}
    \Pr\set{\opnorm{V^*_kV_k-\tr(\Phi_{v,N}(k))I_{\du}}>2\beta\:  \tr(\Phi_{v,N}(k))}\le \\9^{2\du} 2    \exp\paren{-144^{-1}\min\left\{\beta^2, \beta \right\}\frac{\sigmae^4 }{K^4}\frac{\tr(\Phi_{v,N}(k))}{\opnorm{\Phi_{v,N}(k)}}}.
\end{multline}
\textbf{Step 2): Bound on the operator norm $\opnorm{V_k}$.} We invoke the following lemma.
\begin{lemma}[(3.2) in~\cite{vershynin2018high}]\label{lem:elementary_inequality}
Let $z,s\ge 0$ be any two positive real numbers. If $\abs{z-1}\ge s$ then $\abs{z^2-1}\ge \max\{s,s^2\}$.
\end{lemma}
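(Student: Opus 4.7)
The plan is to prove the elementary inequality $|z^2-1| \ge \max\{s, s^2\}$ by factoring $z^2 - 1 = (z-1)(z+1)$ and performing a short case analysis based on whether $z$ lies to the right or to the left of $1$. The hypothesis $|z-1|\ge s$ together with $z\ge 0$ splits into exactly two regimes: either $z\ge 1+s$, or $0\le z\le 1-s$; the latter additionally forces $s\le 1$. The two regimes behave qualitatively differently: in the first, both factors $z-1$ and $z+1$ are simultaneously large, giving plenty of slack; in the second, $z+1$ is only guaranteed to be at least $1$, so the bound is tighter, but this is compensated by the forced constraint $s\le 1$, which makes $\max\{s,s^2\}=s$.

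First I would dispatch the case $z\ge 1+s$. Both factors in the product are nonnegative, so
\[
z^2-1 \;=\; (z-1)(z+1) \;\ge\; s\,(2+s) \;=\; 2s+s^2,
\]
which is immediately at least $\max\{s,s^2\}$ with room to spare.

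For the remaining case $0\le z\le 1-s$ (which requires $s\le 1$), I would use the trivial bound $1+z\ge 1$ to write
\[
1-z^2 \;=\; (1-z)(1+z) \;\ge\; s\cdot 1 \;=\; s.
\]
Since $s\le 1$ implies $s^2\le s$, the right-hand side equals $\max\{s,s^2\}$, finishing the argument.

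There is essentially no technical obstacle: the only subtle point is to notice that when $z$ is to the left of $1$, nonnegativity of $z$ automatically restricts $s$ to $[0,1]$, and this is precisely the regime where the weaker lower bound $s$ (rather than $s^2$) is the binding part of $\max\{s,s^2\}$. Once the factorization and the correct case split are in place, the proof is just bookkeeping.
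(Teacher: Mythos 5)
Your proof is correct and complete; the paper itself gives no proof of this lemma, simply citing it from Vershynin's book, and your two-case factorization argument (with the observation that $0\le z\le 1-s$ forces $s\le 1$ and hence $\max\{s,s^2\}=s$) is exactly the standard verification of that cited inequality. Nothing further is needed.
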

Let $s^2=\min\{\beta,\beta^2\},s\ge 0$, which is equivalent to $\beta=\max\{s,s^2\}$.
Then, the event 
\begin{equation*}
\mathcal{E}_1\triangleq\set{\Big|\opnorm{V_k}/\sqrt{\tr(\Phi_{v,N}(k))}-1\Big|\ge 2 s}\end{equation*}
implies that there exists a unit norm $\xi\in\real^{\du}$ such that
\[\Big|\snorm{V_k\xi}/\sqrt{\tr(\Phi_{v,N}(k))}-1\Big|\ge 2 s.\]
By Lemma~\ref{lem:elementary_inequality}, this implies that
\[
\Big|\snorm{V_k\xi}^2/\tr(\Phi_{v,N}(k))-1\Big|\!\ge\! \max\{2 s,4s^2\}\ge 2\max\{s,s^2\}.
\]
This together with~\eqref{eq:variational_representation_symmetric_matrix} imply the following event holds
\begin{equation*}
\mathcal{E}_2\triangleq\set{\opnorm{V^*_kV_k-\tr(\Phi_{v,N}(k))I_{\du}}>2\beta\:  \tr(\Phi_{v,N}(k))}.\end{equation*}
Since $\mathcal{E}_1\subset\mathcal{E}_2$ the result follows from~\eqref{eq:noise_concentration_squared}, where we replace $\min\{\beta^2,\beta\}$ with $s^2$.
\end{proof}
\section{Proof of Lemma~\ref{lem:PE}}
Fix an experiment index $i$. Since $u^{(i)}_t$ is periodic with period $M$ and $\Np$ is an integer, we have
\begin{equation*}
\begin{aligned}
   \sum_{t=0}^{N-1}u^{(i)}_te^{-j\omega_k t}&=\sum_{\tau=0}^{\Np-1}\sum_{t=0}^{M-1}u^{(i)}_te^{-j\omega_k (M\tau+t)}\\
   &=\paren{\sum_{\tau=0}^{\Np-1}e^{-j\frac{2\pi k}{\Np}\tau}}\paren{\sum_{t=0}^{M-1}u^{(i)}_te^{-j\omega_k t}}
\end{aligned}
\end{equation*}
Therefore, we obtain
\begin{equation*}
\begin{aligned}
   U^{(i)}_{\Np\ell}&=\sqrt{\Np}\tilde{U}^{(i)}_l,\,\text{for }\Np\ell, \ell\in[M]\\
   U^{(i)}_k&=0,\,\text{elsewhere. }
\end{aligned}
\end{equation*}
For $k=\Np\ell$ we have
\[
U_kU^*_k=\Np \tilde{U}_\ell\tilde{U}^*_\ell\succeq N_p \sigma^2_{u,\ell}I_{\du}.\tag*{\qed}
\]

\section{Proof of Theorem~\ref{thm:finite_sample_ETFE}}
By~\eqref{eq:estimation_error}, the estimation error is upper bounded by
\[
\opnorm{\hat{G}_k-G(e^{j\omega_k})}\le \opnorm{V_k}\opnorm{U_k^{-1}}+\opnorm{T_{k,N}}\opnorm{U^{-1}_k}.
\]
Invoking Lemma~\ref{lem:noise_concentration}, we have
\[\Pr\paren{\opnorm{V_k}\ge(1+2s)\sqrt{\tr(\Phi_{v,N}(k))}}\le 9^{2\du}2\exp\paren{-\frac{1}{144}s^2\frac{\sigmae^4 }{K^4}\frac{\tr(\Phi_{v,N}(k))}{\opnorm{\Phi_{v,N}(k)}}}\]
Now fix a failure probability $\tilde{\delta}$. Choosing $s$ such that
\[
\frac{1}{144}s^2\frac{\sigmae^4}{K^4}\frac{\tr(\Phi_{v,N}(k))}{\opnorm{\Phi_{v,N}(k)}}\ge 4\du\log 9+\log 1/\tilde{\delta}
\]
guarantees that the above probability is upper bounded by $\tilde{\delta}$. Hence, with probability at least $1-\tilde{\delta}$
\[
\opnorm{V_k}\le \sqrt{\tr(\Phi_{v,N}(k))}+\sqrt{144}\frac{K^2}{\sigmae^2}\sqrt{\opnorm{\Phi_{v,N}(k)}}\sqrt{4\log 9\,\du+\log 1/\tilde{\delta}}.
\]
To simplify the final expression define the universal constant
\[
c=\sqrt{144}\sqrt{4\log 9}.
\]
Then, we obtain that with probability at least $1-\tilde{\delta}$
\begin{equation}\label{eq:pre_union_bound}\opnorm{V_k}\le \sqrt{\tr(\Phi_{v,N}(k))}+c\frac{K^2}{\sigmae^2}\sqrt{\opnorm{\Phi_{v,N}(k)}}\sqrt{\du+\log 1/\tilde{\delta}}.\end{equation}

To finish the proof we need to establish the upper bound~\eqref{eq:pre_union_bound} over all $k=\Np\ell,$ $\ell\in[M]$. 
It is sufficient to set $\tilde{\delta}=\delta/M$ and consider a union bound over all $\ell\in[M]$. Then, with probability at least $1-\delta$, for all $k=\Np\ell$, $\ell\in[M]$ 
\begin{equation*}
\opnorm{V_k}\le \sqrt{\dy}\sqrt{\opnorm{\Phi_{v,N}(k)}}+c\frac{K^2}{\sigmae^2}\sqrt{\opnorm{\Phi_{v,N}(k))}}\sqrt{\log(M/\delta)+\du},
\end{equation*}
where we used $\tr(\Phi_{v,N}(k))\le \dy\opnorm{\Phi_{v,N}(k)}$.
The proof now follows from Lemma~\ref{lem:deterministic_transient} and Lemma~\ref{lem:PE}.\hfill$\qed$
\section{Proof of Theorem~\ref{thm:sup_norm_guarantees}}
It follows directly from Theorem~\ref{thm:finite_sample_ETFE} and
\[
\opnorm{G(e^{j\omega})-\hat{G}^N(e^{j\omega})}\le \frac{\pi}{M}\snorm{G}_{\star}+\opnorm{G(e^{j\omega_{\ell N_p}})-\hat{G}_{\ell N_p}}.\tag*{\qed}
\]

\end{document}